\documentclass[11pt]{article}
\usepackage[margin=1in]{geometry}
\usepackage{amsmath,amssymb,amsthm}
\usepackage{booktabs}
\usepackage{array}
\usepackage{float}
\usepackage{graphicx}
\usepackage{natbib}
\usepackage[hidelinks]{hyperref}
\hypersetup{pdftitle={Information Set Emulation: Causal Certificates for AI Derived EHR Features},
  pdfauthor={Takes Fujita and Nobutaka Hattori}}
\ifdefined\pdfgentounicode
  \input{glyphtounicode}
\fi
\usepackage{enumitem}
\setlist[itemize]{leftmargin=1.2em}
\setlist[enumerate]{leftmargin=1.4em}

\newtheorem{definition}{Definition}
\newtheorem{theorem}{Theorem}
\newtheorem{proposition}{Proposition}
\newtheorem{corollary}{Corollary}

\title{Information Set Emulation: Causal Certificates for AI Derived EHR Features}
\author{Takes Fujita\textsuperscript{1} \quad Nobutaka Hattori\textsuperscript{2}\\[4pt]
\small \textsuperscript{1}VRI\\
\small \textsuperscript{2}Department of Neurology, Juntendo University School of Medicine}
\date{}

\begin{document}
\maketitle

\begin{abstract}
AI and large language models can recover clinically meaningful features from electronic health records (EHRs), but predictive usefulness does not establish admissibility for causal inference. We introduce information set emulation: an AI typed lift attaches source evidence, clinical and recording times, decision-time availability, representation version, proposed causal roles, and unresolved ambiguity to extracted features under a locked target trial. Causal certificates record auditable evidence for those roles. Features with unresolved downstream roles are routed to compatible reporting or separate analyses.

Typed evidence defines an observational fiber of causal worlds consistent with the observed law. The locked scalar estimand maps this fiber to a compatible image whose squared Chebyshev radius equals the residual minimax mean squared error when the image is nonempty and compact. This classical identity provides a target-specific measure of information ambiguity. The contribution is its integration with a joint EHR observation map and an auditable certificate architecture.

Under explicit exchangeability, positivity, and nuisance-consistency conditions, we give identification and cross-fitted augmented inverse probability weighted estimation, distinguishing empirical and population targets. An EHR compression-drift identity separates the roles of frame presence, treatment assignment, and outcome observation. Artificial simulations and a common-law finite-world example illustrate estimation failures and information-radius reduction. Synthetic Phase 0 notes demonstrate audit diagnostics; a separate role-specific analysis spread illustrates routing and is not an exact fiber radius. All experiments are synthetic. The framework specifies when reconstructed information can support a point claim and when compatible reporting is required.

\end{abstract}

\noindent\textbf{Keywords:} causal inference; electronic health records; artificial intelligence; causal certificates; AI derived EHR features; target trial emulation; information sets; real world evidence; partial identification; minimax information radius

\section{Introduction}

Electronic health records have changed the scale of clinical causal research. They contain diagnoses, medications, laboratory results, vital signs, radiology reports, clinician notes, nursing documentation, orders, referrals, encounter patterns, and administrative records. They also contain traces of clinical reasoning that are rarely encoded as clean variables. Frailty may appear in a note. Performance status may be implied by mobility descriptions, goals of care, oxygen requirement, or the decision not to offer an aggressive treatment. Contraindications may appear in free text rather than in problem lists. Treatment choice may depend on a clinician judgement that is partly recorded, partly tacit, and partly institutional.

This creates a familiar problem for retrospective causal inference. The target trial may be well specified, yet the information set that would have been measured before treatment in a prospective study is only partially available in the EHR. Standard structured variables may omit the very reasons why one treatment was chosen over another. AI and large language models can extract clinically interpretable features from free text and irregular event streams. They can recover performance status, frailty, goals of care, symptom burden, treatment rationale, or care context that would otherwise be unmeasured. Used well, such features can reduce the distance between a retrospective EHR study and the prospective study it seeks to emulate.

Used poorly, the same features can make the causal analysis worse. A feature extracted from a discharge summary may look like baseline frailty while encoding the outcome. A post treatment progress note may contain treatment response. A laboratory trajectory may reflect the decision to monitor more closely. A note density score may measure observation intensity rather than disease state. A model selected because it improves outcome prediction may produce an effect modifier that cannot support a confirmatory claim. The issue is not whether AI is useful. It is whether the information it reconstructs can identify the causal effect being estimated.

This paper proposes information set emulation. Target trial emulation asks whether the observational study has emulated the trial protocol. Information set emulation asks whether the retrospective data, augmented by AI, have emulated the pre treatment information set that a prospective study would have used for eligibility, treatment choice, observation, and outcome comparison. The distinction matters. Two studies can emulate the same treatment strategies, time zero, outcome window, and target population, yet differ sharply in the information available for confounding control, follow up, and interpretation.

The method starts with AI, but it does not treat AI as an oracle. The first step is not outcome modeling, propensity score estimation, or generic feature extraction. The first step is causal sorting of the EHR event stream. Given the target trial lock, AI reads the raw event stream and returns a typed causal information ledger: candidate features, their source evidence, the time they refer to, when they were recorded, whether they were available at the treatment decision, their proposed causal role, and unresolved ambiguity. This lift is the step that can make the retrospective record resemble a prospective information set. Its value is measured by how much it shrinks the compatible estimand set, not by prediction accuracy alone. The certificate and information radius then decide whether the remaining ambiguity is small enough to support causal effect estimation.

The core claim is therefore stronger than a data cleaning recommendation. Without sufficient evidence about source time, availability, and causal role, the prospective causal effect need not be identified from the flat EHR table. The obstruction is nonidentification when compatible worlds imply different values of the locked estimand. With enough typed evidence, the compatible set of prospective estimands shrinks. If the remaining radius is small relative to the scientific tolerance and statistical error, a retrospective EHR study can approximate the causal precision of a prospective study for the locked claim.

Information set emulation uses a causal typing principle: an AI derived feature must have a role relative to the locked target trial before it can enter causal estimation. The EHR problem differs from ordinary generated covariate use because the relevant object is not a single generated variable. It is a reconstructed pre treatment information set whose source time, availability, and causal role are only partially observed. The paper therefore moves from feature extraction to information structure reconstruction. EHRs are a demanding case, but the object is broader: a retrospective event stream must be organized into typed evidence before it can emulate the information structure of a prospective design.

The paper has three contributions. The first is constructive. It defines the AI typed lift and the causal certificate as an auditable way to turn a retrospective event stream into typed evidence for a locked target trial. This includes the downstream domination rule, which keeps features with unresolved mediator, outcome proxy, observation process, or mixed-window roles out of the primary point estimator while retaining them for compatible reporting and sensitivity analysis.

The second contribution is a decision-theoretic formulation specialized to typed EHR evidence. Typed evidence induces an observational fiber of prospective causal worlds. The locked estimand maps that fiber to a compatible estimand image, and the squared Chebyshev radius of this image is the exact minimax mean squared error for the locked scalar target. AI typing and additional certification are therefore refinements of the information structure: they are valuable when they shrink this radius without excluding plausible causal worlds.

The third contribution connects the information structure to estimation and practice. When the radius is small enough and the usual causal identification assumptions are credible, the paper gives an operational EHR causal effect and an augmented inverse probability weighted estimator. It also applies the established conditional-covariance compression identity to the joint EHR observation mechanism, defines a certificate-conditional approximation frontier, and uses artificial EHR simulations, stress tests, and a synthetic Phase~0 note validation to show when AI typed reconstruction succeeds, fails, or should stop at compatible reporting.

The paper is a theory and simulation study. It does not use patient data and does not claim a clinical treatment effect. Several mathematical ingredients are classical; the contribution is the certificate architecture, the joint EHR observation map, the role-aware compatible estimand construction, and their operational connection to validation and stopping rules. The aim is to define when AI reconstructed EHR information can support causal effect identification and estimation, and when it cannot.
\section{EHR event streams and prospective information}

Let $i=1,\ldots,n$ index patients. A patient may contribute one or more eligible treatment episodes; for clarity we begin with one episode per patient and return to repeated episodes later. Let $T_i$ be the index time or time zero. Let $A_i\in\{0,1\}$ denote the treatment strategy initiated at $T_i$, and let $Y_i$ be an outcome measured over a fixed follow up window. Let $Y_i(a)$ be the potential outcome under treatment strategy $a$.

The EHR is an event stream
\[
E_i=\{e_{iq}:q=1,\ldots,Q_i\}.
\]
Each event is represented as
\[
e_{iq}=\left(t^{clin}_{iq},t^{meas}_{iq},t^{rec}_{iq},t^{avail}_{iq},b_{iq},x_{iq}\right),
\]
where $t^{clin}$ is the clinical time referred to by the event, $t^{meas}$ is the measurement time, $t^{rec}$ is the time at which the event is recorded in the EHR, $t^{avail}$ is the time at which the information was available for decision making, $b$ is the event type, and $x$ is the observed content. These clocks need not agree. A note written after treatment may describe symptoms before treatment. A laboratory value measured before treatment may become available only after treatment. A discharge summary may contain both baseline history and outcome adjacent interpretation.

Let $\mathcal F_P^-$ denote the prospective pre treatment information set that a well designed prospective study would have measured or fixed before treatment. It includes the target trial eligibility information, clinical state, history, treatment availability, design variables, and decision relevant judgement that would have been available at $T_i$. The prospective information set is not necessarily observed in the retrospective EHR.

Let $\Lambda_\theta$ denote the AI typed lift applied before causal estimation. Given the EHR stream and the locked target trial, it returns a typed causal information ledger
\[
\Lambda_\theta(E_i,L_T)=
\left\{Z_{ij},\mathcal S_{ij},\widehat{\mathcal T}_{ij},K_{ij},u_{ij}:j=1,\ldots,p_i\right\}.
\]
Here $Z_{ij}$ is an extracted feature; $\mathcal S_{ij}$ is the source evidence, such as note span, order, laboratory record, or event segment; $\widehat{\mathcal T}_{ij}$ is the proposed set of causal roles; $K_{ij}$ is the certificate containing timing, availability, source, version, and role evidence; and $u_{ij}$ records unresolved ambiguity. The lift may be produced by an LLM, a rule based NLP system, a conventional model, a human AI workflow, or a manual review protocol. It is called a lift because it turns raw EHR records into a typed information object rather than a flat feature matrix.

Let $E_i^-(K)$ be the part of the EHR event stream declared by the typed ledger to be admissible for reconstructing pre treatment information. The resulting reconstructed vector can be written
\[
\widetilde V_i=R_\theta\{E_i^-(K),\xi_i\},
\]
where $\xi_i$ records stochastic generation, prompting, sampling, or extraction randomness.

The important object is not a vector of AI values but a typed information set. Compatibility and certification are distinct. For patient $i$ and feature $j$, define
\[
\mathcal T_{ij}(K)=\{t\in\mathcal T:\operatorname{Comp}_t(K_{ij},L_T)=1\},
\qquad
\mathcal C_{ij}^{cert}(K)=\{t\in\mathcal T:\operatorname{Cert}_t(K_{ij},L_T)=1\}.
\]
Here $\operatorname{Comp}_t=1$ means that the declared evidence and model have not ruled out role $t$; $\operatorname{Cert}_t=1$ means that the role-specific evidence requirements are met. Certification implies compatibility, so $\mathcal C_{ij}^{cert}(K)\subseteq\mathcal T_{ij}(K)$. Failure to certify a role does not rule it out. The predicates and the handling of missing evidence must be declared before estimation.
Let
\[
\mathcal A_{pre}=\{B,D,H,U\},\qquad
\mathcal A_{down}=\{M,Y,\mathsf{Obs},I,\mathsf{Excluded}\}.
\]
The patient-specific routing sets are
\[
J_{point,i}(K)=\{j:\mathcal T_{ij}(K)\neq\emptyset,\ 
\mathcal T_{ij}(K)\subseteq\mathcal A_{pre},\
\mathcal C_{ij}^{cert}(K)\cap\mathcal A_{pre}\neq\emptyset\},
\]
and
\[
J_{amb,i}(K)=\{j:\mathcal T_{ij}(K)\cap\mathcal A_{pre}\neq\emptyset,\
\mathcal T_{ij}(K)\cap\mathcal A_{down}\neq\emptyset\}.
\]
Features with both pre-treatment and remaining downstream roles are excluded from the primary point estimator. For a declared common feature schema, let $H_i$ contain the admitted values and their admissibility and missing-value indicators, and define
\[
\mathcal H_{\theta,K,i}^{point}=\sigma(H_i).
\]
All components of $H_i$, including routing and missing-value indicators, must satisfy pre-treatment measurability for a point analysis. An outcome-informed routing decision is not made admissible by masking the value alone. When the patient index is suppressed below, $\mathcal H_{\theta,K}^{point}$, $J_{point}(K)$, $J_{amb}(K)$, and $\mathcal T_j(K)$ denote these patient-specific objects under the common locked rule.

Ambiguous features remain available for compatible reporting of the same locked estimand across remaining causal worlds. Analyses that instead change the estimand, such as a direct-effect analysis, require separate locks and separate reports; their estimates are not automatically members of the original compatible estimand image. This downstream domination rule prevents a possible outcome proxy or mediator from entering the primary point estimator merely because a baseline interpretation is also plausible.

A vector of values is therefore not an information set. It becomes one only after the AI typed lift has attached source windows, clocks, representation version, causal role claims, and unresolved ambiguity, and after unresolved downstream roles have been separated from the point estimation set.

The lift is allowed to be stochastic before it is locked. We write
\[
K_\theta=\Lambda_\theta(E,L_T,\xi)
\]
for the evidence produced by the AI system, where $\xi$ denotes seed, sampling, prompt, fold, or model randomness declared before outcome analysis. After the lift is produced and locked, inference is conditional on the realized evidence object, or else averages over the declared distribution of $\xi$. The stochasticity is not the central difficulty. The central difficulty is that the same flat table can remain compatible with several prospective information structures. The typed lift is useful only insofar as it removes incompatible information structures without silently removing plausible ones.

The construction is short. The AI produces typed evidence $K_\theta$. That evidence refines the observational fiber of causal worlds compatible with the EHR record. The locked estimand maps this fiber to a compatible estimand set $\mathfrak P(K_\theta)$, whose squared Chebyshev radius $r^2(K_\theta)$ is the exact minimax mean squared error of that information structure for the scalar target. AI typing is valuable when it reduces that radius; additional certification is valuable when it reduces it further.

\begin{definition}[Information set emulation]
A retrospective event stream study, including an EHR study, performs information set emulation for a target trial if it specifies a prospective pre treatment information set $\mathcal F_P^-$ and constructs an auditable retrospective information set $\mathcal H_K$ intended to approximate the components of $\mathcal F_P^-$ that are needed for treatment assignment, eligibility, target construction, outcome observation, and potential outcome prediction.
\end{definition}

This paper develops the framework through EHR target trial emulation because EHRs are a demanding case: clinical facts, recording times, availability, observation processes, and outcome adjacent narratives are routinely entangled. The definition deliberately includes treatment assignment, eligibility, and observation processes. Outcome prediction alone is insufficient. A reconstructed feature may predict $Y$ well but fail to retain the information that determined treatment or follow up.

\section{AI typed EHR lift and causal certificates}

The typed lift is the operational engine of information set emulation. It performs the first causal sorting of the raw EHR: likely baseline state, design or decision information, prior history, candidate effect modifier, observation process, mediator, outcome proxy, intercurrent event, and excluded or unresolved material. This is not a claim that AI is an oracle. It is a claim that large scale retrospective causal inference requires a first pass that is too granular for ordinary manual variable selection. AI can supply that pass when it returns evidence along with types.

An AI derived EHR feature is therefore a claim about clinical information. A label such as frailty score, performance status, physician concern, or social vulnerability is not enough. The feature requires a certificate. A certificate does not certify truth. It records auditable evidence that a feature is admissible for a proposed causal role under a locked target trial. If the AI typed lift is strong, many roles become certified and the retrospective information set approaches the prospective information set. If the lift is weak, the compatible estimand set remains wide.

\begin{definition}[Causal certificate]
For a generated feature $Z_{ij}$, a causal certificate is
\[
K_{ij}=\left(\widehat T_{ij},\mathcal W^{clin}_{ij},\mathcal W^{meas}_{ij},\mathcal W^{rec}_{ij},\mathcal W^{avail}_{ij},R_{\theta j},\mathcal E_{ij},\mathcal G_{ij},\mathcal M_{ij}\right).
\]
Here $\widehat T_{ij}$ is the proposed causal type; the four $\mathcal W$ terms are the clinical, measurement, recording, and availability windows; $R_{\theta j}$ is the representation version; $\mathcal E_{ij}$ records evidence used by the feature; $\mathcal G_{ij}$ records the local causal role claim relative to the target trial; and $\mathcal M_{ij}$ records unresolved information, such as tacit clinical judgement not present in the EHR.
\end{definition}

The certificate may be produced by an AI system, a human reviewer, or a combined workflow. The mathematical requirement is not human review as such. It is that the proposed causal role be supported by evidence sufficient to place the feature in the locked target trial. A capable AI system can provide that evidence. If the relevant evidence is absent from the EHR and from the protocol, neither an AI system nor a human analyst can identify the role from the analysis table alone.

Let the possible roles be
\[
\mathcal T=\{B,D,H,U,M,Y,\mathsf{Obs},I,\mathsf{Excluded}\},
\]
where $B$ is baseline clinical state, $D$ is design or decision information, $H$ is prior history, $U$ is a candidate baseline marker or effect modifier, $M$ is a post treatment mediator, $Y$ is outcome or outcome proxy, $\mathsf{Obs}$ is observation or missingness process, $I$ is an intercurrent event, and $\mathsf{Excluded}$ denotes exclusion for the locked analysis.

For a locked target trial $L_T$, define
\[
\operatorname{Cert}_t(K_{ij},L_T)=1
\]
when the certificate passes the required predicates for role $t$. We write this as
\[
\operatorname{Cert}_t(K_{ij},L_T)=
I\{\mathsf{Time}_t=1,\ \mathsf{Source}_t=1,\ \mathsf{Avail}_t=1,\ \mathsf{Role}_t=1,\ \mathsf{Version}=1,\ \mathsf{Stability}=1\}.
\]
The predicates are role specific. $\mathsf{Time}_t$ checks the clinical and measurement clocks. $\mathsf{Source}_t$ checks the raw input window and whether source separation is required. $\mathsf{Avail}_t$ checks whether the information was available at the relevant decision time when the role is a decision or design role. $\mathsf{Role}_t$ checks the local causal claim. $\mathsf{Version}$ requires a locked representation version. $\mathsf{Stability}$ requires deterministic extraction, a fixed seed, or a declared seed distribution.

\begin{table}[H]
\centering
\caption{Operational rubric for causal certificates. A feature can enter the primary causal estimator only through a role whose predicates pass under the locked target trial.}
\label{tab:certificate_rubric}
\scriptsize
\resizebox{\textwidth}{!}{%
\begin{tabular}{>{\raggedright\arraybackslash}p{0.12\textwidth}>{\raggedright\arraybackslash}p{0.23\textwidth}>{\raggedright\arraybackslash}p{0.22\textwidth}>{\raggedright\arraybackslash}p{0.22\textwidth}>{\raggedright\arraybackslash}p{0.18\textwidth}}
\toprule
Role & Required evidence & Availability requirement & Failure condition & Permitted use \\
\midrule
$B$ baseline state & Clinical content and measurement refer to time before $T_i$; no post treatment component unless source separated & Not always required for biological state, but must be pre treatment content & Discharge summary, outcome adjacent note, mixed window without separation & Adjustment or standardization \\
$D$ design or decision & Eligibility, site, treatment availability, clinician decision information, monitoring rule, or target construction role is documented & Must be available at or before the treatment decision time & Replaced by a predictive score without safe compression & Retain, model, or stratify \\
$H$ prior history & Prior treatment, prior outcome, prior event, or prior care trajectory before $T_i$ & Required if used to reconstruct treatment decision; otherwise pre treatment history suffices & History reconstructed from post outcome narrative without separation & Adjustment, history summary \\
$U$ candidate marker & Pre treatment feature with stable extraction and prespecified or secondary status & Same as baseline state unless claimed as decision information & Outcome guided discovery used as confirmatory effect modifier & Adjustment, effect modification, secondary claim \\
$M$ mediator & Generated after treatment or encodes treatment response & Not available at $T_i$ by definition & Entered as baseline adjustment for total effect & Separate mediation lock \\
$Y$ outcome proxy & Encodes outcome, outcome risk after follow up begins, discharge status, improvement summary, or endpoint definition & Not admissible for baseline state & Entered in primary adjustment set & Outcome definition, exclusion, exploratory only \\
$\mathsf{Obs}$ observation process & Visit frequency, testing frequency, note density, censoring, follow up, measurement process & Relevant to observation or missingness mechanism & Treated only as disease state when it also determines observation & Observation or missingness model \\
$I$ intercurrent event & Post index treatment change, discontinuation, rescue therapy, transfer, death before endpoint, or competing event & Occurs after $T_i$ or changes interpretation after $T_i$ & Entered as baseline adjustment without strategy & Intercurrent event strategy \\
$\mathsf{Excluded}$ & Role cannot be certified, window is unresolved, version is unstable, or role is outside the lock & Not applicable & Any confirmatory causal use & Exclude or exploratory queue \\
\bottomrule
\end{tabular}%
}
\end{table}

Two clock distinctions are essential. First, $t^{clin}<T_i$ means that the clinical fact existed before treatment. Second, $t^{avail}\le T_i$ means that the information was available to the decision process. A post index note may legitimately contain a pre treatment fact, but that fact cannot be used to reconstruct treatment decision information unless the certificate also supports availability at the decision time. In this paper, source separation can certify baseline content; availability certification is a stronger claim.

The certificate has consequences. If $\operatorname{Cert}_t(K_{ij},L_T)=0$ for all primary state roles, the feature cannot enter the primary causal estimator as pre treatment information. If several roles remain possible, the target is a compatible estimand set. If the feature is outcome tuned, the claim is exploratory unless the extraction, selection, and estimation are separated by design.

\subsection*{Certificate workflow}

The certificate is intended to be executed as a workflow, not only read as a table. For each candidate AI derived feature, the following steps are performed before outcome modeling or effect estimation.

\begin{enumerate}
\item Lock the target trial: treatment versions, time zero, treatment decision milestone, outcome window, target population, observation strategy, and claim status.
\item Store the raw source pointer for the feature: note identifier, section, sentence span, order, laboratory record, imaging report, or event stream segment.
\item Extract four clocks: referenced clinical time, measurement time when applicable, EHR recording time, and availability time relative to the treatment decision.
\item Store the representation version: model, prompt or dictionary, preprocessing, schema, seed rule, post processing, and software hash when available.
\item Propose one or more causal roles in $\mathcal T$ and attach role evidence. A role without evidence remains only in the compatible role set; it is not accepted for the primary estimator.
\item Apply the role predicates in Table~\ref{tab:certificate_rubric}. A primary state or decision role is accepted only if the corresponding time, source, availability, role, version, and stability predicates pass.
\item If role predicates conflict, choose the most downstream role for the confirmatory point estimator, or report the compatible estimand set. Outcome proxy and mixed window flags dominate baseline labels unless source separation is certified.
\item Record all failures. A failure is not deletion of scientific information; it is a route to sensitivity analysis, mediation analysis, observation modeling, exploratory discovery, or next study measurement.
\end{enumerate}

This workflow prevents a common failure mode. A feature can be clinically meaningful and still fail as baseline information if its source is mixed, its availability is unknown, or its extraction was tuned after seeing the outcome. This is a routing decision, not deletion of information. Features that fail the point-estimator predicates are carried to compatible reporting, observation modeling, mediation locks, mechanism analyses, or next-study hypotheses. Conversely, a feature need not be discarded merely because it appears in a late note: if the pre treatment content is source separated and the intended role is baseline clinical state rather than treatment decision information, the certificate can still support a restricted use.

\subsection*{Architecture neutral extraction schema}

The framework does not require a specific NLP architecture. A rule based system, ClinicalBERT style classifier, retrieval augmented large language model, or human AI workflow can be used if it produces the evidence required by the certificate. The recommended extractor output is a structured record rather than a single feature value.

\begin{table}[H]
\centering
\caption{Minimum extraction schema for time aware certificate generation. The fields can be produced by an AI system, human reviewer, or combined workflow.}
\label{tab:extraction_schema}
\scriptsize
\resizebox{\textwidth}{!}{%
\begin{tabular}{>{\raggedright\arraybackslash}p{0.22\textwidth}>{\raggedright\arraybackslash}p{0.37\textwidth}>{\raggedright\arraybackslash}p{0.33\textwidth}}
\toprule
Field & Required content & Why it matters \\
\midrule
Feature value & Extracted value, category, score, or text span & Defines the candidate reconstructed information \\
Source pointer & Document, section, sentence span, event identifier, or order identifier & Permits source sentence audit and reproducibility \\
Recording time & Time the source entered the EHR & Detects post index or outcome adjacent sources \\
Referenced clinical time & Time the statement refers to, such as before admission or after treatment & Separates clinical facts from the time they were recorded \\
Measurement time & Time a test, assessment, or observation occurred, if applicable & Prevents post treatment measurements from being treated as baseline \\
Availability claim & Whether the information was available at the treatment decision time and the evidence for that claim & Distinguishes baseline clinical state from decision information \\
Treatment relation & Pre, post, mixed, or uncertain relative to the locked milestone & Routes features away from primary baseline adjustment when needed \\
Outcome relation & Independent, outcome adjacent, outcome proxy, or uncertain & Prevents endpoint information from entering the state object \\
Proposed role and uncertainty & Candidate role in $\mathcal T$, confidence, alternative roles, and unresolved facts & Defines the compatible role set and later reporting radius \\
\bottomrule
\end{tabular}%
}
\end{table}

The schema is deliberately stricter than ordinary feature extraction. A model that outputs only a frailty score or performance status label is not enough for confirmatory causal estimation. It must also expose the source and timing evidence that makes the label admissible for the locked causal role.

\section{Prospective approximation as comparison of information experiments}

Information set emulation can be stated without first naming any estimator. In the language of statistical experiment comparison \citep{lecam1986}, the prospective study and the AI typed EHR study are two statistical experiments. The question is whether the second can reproduce the information in the first closely enough for the target causal decision.

Let $X_P$ denote the pre treatment information that would be observed under the prospective study protocol. Its law under causal world $\eta\in\Omega$ is denoted $P^P_\eta$. Let $X_\Lambda$ denote the information observed after the AI typed lift: the treatment episode, typed ledger, point information set, compatible role set, and any locked design variables. Its law is denoted $P^\Lambda_\eta$. The prospective information experiment and the AI typed EHR information experiment are
\[
\mathcal E_P=\{P^P_\eta:\eta\in\Omega\},\qquad
\mathcal E_\Lambda=\{P^\Lambda_\eta:\eta\in\Omega\}.
\]
Here $\Omega$ indexes the causal worlds under consideration, including potential outcome laws, treatment choice or assignment mechanisms, EHR recording and observation mechanisms, and the distribution of the AI typed lift if it is stochastic before locking.

A randomized reconstruction kernel $Q$ maps the AI typed EHR observation $X_\Lambda$ to a pseudo prospective observation. Define the one sided prospective information deficit
\[
\Delta(\mathcal E_P,\mathcal E_\Lambda)
=
\inf_Q\sup_{\eta\in\Omega}
\left\|QP^\Lambda_\eta-P^P_\eta\right\|_{\mathrm{TV}}.
\]
The direction is intentional. If $\Delta(\mathcal E_P,\mathcal E_\Lambda)$ is small, then the AI typed EHR experiment can reproduce the prospective pre treatment information experiment up to a small error. If it is large, the retrospective study may still be useful, but it is not close to the prospective study at the level of information available before treatment.

\begin{theorem}[Risk transfer under information experiment approximation]
Suppose a decision rule $d_P$ in the prospective information experiment has loss $0\le \ell\{d_P(X_P),\eta\}\le L$. If $\Delta(\mathcal E_P,\mathcal E_\Lambda)\le \varepsilon$, then for every $\delta>0$ there exists a randomized EHR decision rule $d_{\Lambda,\delta}$ such that
\[
\sup_{\eta\in\Omega}
\left|E_\eta\ell\{d_{\Lambda,\delta}(X_\Lambda),\eta\}-E_\eta\ell\{d_P(X_P),\eta\}\right|
\le L(\varepsilon+\delta).
\]
Consequently, for the corresponding minimax risks,
\[
R_\Lambda^\star\le R_P^\star+L\varepsilon.
\]
\end{theorem}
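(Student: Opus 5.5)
The proof is the standard Le Cam risk-transfer argument: compose the near-optimal reconstruction kernel with the prospective decision rule. Since $\Delta(\mathcal E_P,\mathcal E_\Lambda)\le\varepsilon$ is defined as an infimum over kernels, for each $\delta>0$ we fix a randomized kernel $Q_\delta$ with
\[
\sup_{\eta\in\Omega}\bigl\|Q_\delta P^\Lambda_\eta-P^P_\eta\bigr\|_{\mathrm{TV}}\le\varepsilon+\delta .
\]
We then define $d_{\Lambda,\delta}(X_\Lambda)=d_P(\widetilde X_P)$, where $\widetilde X_P\sim Q_\delta(\cdot\mid X_\Lambda)$ is drawn with auxiliary randomization independent of $\eta$. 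This is a legitimate randomized EHR rule, because $Q_\delta$ does not depend on the unknown world.

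For fixed $\eta$, the marginal law of $\widetilde X_P$ is $Q_\delta P^\Lambda_\eta$. Hence
\[
E_\eta\ell\{d_{\Lambda,\delta}(X_\Lambda),\eta\}=\int f_\eta\,d(Q_\delta P^\Lambda_\eta),
\qquad
E_\eta\ell\{d_P(X_P),\eta\}=\int f_\eta\,dP^P_\eta .
\]
Here $f_\eta(x)=E[\ell\{d_P(x),\eta\}]$, averaged over any internal randomization of $d_P$, is measurable and takes values in $[0,L]$. For probability measures $\mu,\nu$ and any $f$ with values in $[0,L]$,
\[
\Bigl|\int f\,d\mu-\int f\,d\nu\Bigr|\le L\,\|\mu-\nu\|_{\mathrm{TV}},
\]
provided $\|\cdot\|_{\mathrm{TV}}$ denotes $\sup_A|\mu(A)-\nu(A)|$. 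To see this, write $\int f\,d(\mu-\nu)=\int_0^L\{\mu(f>s)-\nu(f>s)\}\,ds$ and bound the integrand by the TV distance. Applying this bound and taking the supremum over $\eta$ gives the first display, with constant $L(\varepsilon+\delta)$.

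For the minimax statement, take any prospective rule $d_P$. The first display gives $\sup_\eta E_\eta\ell(d_{\Lambda,\delta},\eta)\le\sup_\eta E_\eta\ell(d_P,\eta)+L(\varepsilon+\delta)$, and the left side bounds $R^\star_\Lambda$ from above. Taking the infimum over $d_P$ and then letting $\delta\downarrow0$ yields $R^\star_\Lambda\le R^\star_P+L\varepsilon$. This step implicitly assumes the loss is bounded by $L$ for every prospective rule under consideration, or else that $R^\star_P$ is taken over such rules, so we will state the class of rules explicitly.

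The main obstacle is bookkeeping rather than depth. First, the TV normalization must match the constant $L$: with the $\tfrac12\ell_1$ or $\sup_A$ convention the bound holds with constant $L$, but with the full $\ell_1$ norm it only gives $L/2$ times the norm, so we fix the $\sup_A$ convention. Second, measurability must hold so that the composed rule $d_P\circ Q_\delta$ is a well-defined randomized decision on the EHR observation space. We will assume standard Borel observation spaces, so that Markov kernels compose and $f_\eta$ is measurable. The infimum over $Q$ need not be attained, which is why $\delta$ appears in the statement.
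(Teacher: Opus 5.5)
Your proposal is correct and follows the paper's proof: compose a reconstruction kernel $Q_\delta$ that is within $\varepsilon+\delta$ in total variation with $d_P$, bound the risk difference by $L$ times the total variation distance, and pass to minimax risks. Where the paper applies the construction to a $\gamma$-near-minimax prospective rule, you take the infimum over $d_P$ directly, which is equivalent. You also state explicitly two assumptions the paper leaves implicit: the $\sup_A$ normalization of $\|\cdot\|_{\mathrm{TV}}$, and that the loss bound must hold for the near-minimax prospective rules.
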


\begin{proof}
Fix $\delta>0$. By the definition of the infimum and the assumption $\Delta(\mathcal E_P,\mathcal E_\Lambda)\le\varepsilon$, choose a kernel $Q_\delta$ such that
\[
\sup_\eta \left\|Q_\delta P^\Lambda_\eta-P^P_\eta\right\|_{\mathrm{TV}}\le \varepsilon+\delta.
\]
Given $X_\Lambda$, draw $\widetilde X_P\sim Q_\delta(\cdot\mid X_\Lambda)$ and apply $d_P$ to $\widetilde X_P$. This defines $d_{\Lambda,\delta}$. For each $\eta$, the distribution of $\widetilde X_P$ is $Q_\delta P^\Lambda_\eta$, while the prospective rule uses $P^P_\eta$. Since the loss is bounded by $L$, the difference in risks is at most $L\|Q_\delta P^\Lambda_\eta-P^P_\eta\|_{\mathrm{TV}}$, proving the first statement. For the minimax statement, apply the construction to a prospective rule within $\gamma>0$ of $R_P^\star$, take the infimum over retrospective rules, and then let $\delta\downarrow0$ and $\gamma\downarrow0$. If an optimal reconstruction kernel exists, one may set $\delta=0$ in the first statement.
\end{proof}

The theorem gives a formal meaning to the phrase ``retrospective like prospective.'' It does not say that an EHR study is prospective. It says that if the AI typed EHR experiment is close to the prospective information experiment, then any causal decision rule or estimator that would have been available prospectively can be approximated from the typed EHR information with only a controlled increase in risk.

This experiment distance is a global object. It is usually too ambitious to estimate directly in a real EHR study, and it is not the statistic routinely reported in an applied analysis. It should be read as a benchmark for what information set emulation means. The compatible estimand set below is its target specific counterpart. The experiment distance asks whether the whole typed EHR information experiment approximates the prospective information experiment. The information radius asks whether the particular causal estimand of interest has been narrowed enough for the locked claim. Thus $\Delta(\mathcal E_P,\mathcal E_\Lambda)$ is the broad design ideal, while $r(K)$ is the estimand specific diagnostic used for reporting and simulation. The two objects are not tied by a general KL or $f$-divergence lower bound in this paper; such a bound would require additional global smoothness or information projection assumptions on the estimand map. Operationally, the way to reduce the ideal gap is not to estimate $\Delta$ directly, but to refine $K$ through source pointer audit, clock parsing, availability adjudication, source separation, observation process modeling, and the Phase~0 validation described later.

The value of AI typing can also be expressed on the experiment scale. If $\mathcal E_0$ is the untyped retrospective experiment and $\mathcal E_\Lambda$ is the AI typed experiment, define
\[
\operatorname{VoT}_\Delta(\Lambda_\theta)
=
\Delta(\mathcal E_P,\mathcal E_0)-\Delta(\mathcal E_P,\mathcal E_\Lambda).
\]
This is the amount by which the AI typed lift moves the retrospective information experiment toward the prospective information experiment. The risk scale value based on the information radius, defined later, is the same idea restricted to a single target estimand.

The experiment distance is deliberately global. A smaller, target specific object is often more useful for analysis. If $K_P$ denotes the typed evidence that would be available in the prospective study for the same locked target, and $K$ denotes the typed evidence available in the retrospective EHR study, define the prospective emulation gap for the scalar estimand $\Psi$ as
\[
G_\Psi(K;K_P)=r^2(K)-r^2(K_P),
\]
whenever the two radii are evaluated on comparable reporting scales. In an ideal prospective information structure for which the locked estimand is point identified, $r(K_P)=0$, and the gap is simply $r^2(K)$. Thus the global distance $\Delta(\mathcal E_P,\mathcal E_\Lambda)$ asks whether the retrospective information experiment approximates the prospective information experiment as a whole, while $G_\Psi$ asks how much exact minimax risk remains for the particular causal target.

Published trial reports and protocols can help construct the prospective reference $K_P$. They do not determine the EHR information radius by themselves, because $r(K)$ also depends on the source, clock, availability, role evidence, validation-calibrated sensitivity envelopes, and observed law induced by the retrospective record. Their value is to anchor the target information structure: baseline measurements, stratification factors, eligibility criteria, observation schedules, intercurrent event strategies, and outcome definitions that the EHR typed lift must attempt to reproduce. In this limited sense, AI extraction from trial reports can serve as a reverse trial audit: it builds a structured checklist for Phase~0 validation, not a numerical substitute for the EHR certificate.

\section{Observational fibers and compatible estimands}

The relevant object is the set of causal worlds that remain possible after the EHR table and the typed evidence have been fixed. This section writes that set as an observational fiber and shows that its image under the target estimand determines the exact minimax risk of the retrospective information structure.

Let $\Theta$ denote a declared class of causal worlds. A world $\theta\in\Theta$ contains the prospective information law, treatment choice or assignment mechanism, observation mechanism, potential outcomes, EHR recording process, and the distribution of the AI typed lift once the version and stochastic rule are declared. The class $\Theta$ is not an unrestricted universe. It is the identification and sensitivity model named by the analyst: the target trial lock, exchangeability and observation assumptions, role restrictions, source separation constraints, availability error bounds, extraction uncertainty, and positivity restrictions. The information radius is therefore a conditional diagnostic. It is conditional on the observed law, the typed evidence, and the declared model $\Theta$. If a richer model $\Theta_2$ contains a more restrictive model $\Theta_1$, then for the same $p$ and $K$, $\mathfrak P_{\Theta_1,K}(p)\subseteq\mathfrak P_{\Theta_2,K}(p)$ and $r_{\Theta_1,K}(p)\le r_{\Theta_2,K}(p)$. The radius makes these assumptions visible rather than removing them. Given typed evidence $K$, let
\[
\Pi_K:\Theta\to\mathcal P(\mathcal O_K)
\]
be the observation map that sends a causal world to the joint law of all quantities available to the retrospective analyst under evidence $K$: the flat EHR table, the typed ledger, declared source windows, role evidence, availability evidence, and any locked validation information. For an observed law $p$ define the observational fiber
\[
\mathcal F_K(p)=\{\theta\in\Theta:\Pi_K(\theta)=p\}.
\]
A coarse flat table corresponds to a coarse observation map and therefore a large fiber. A stronger AI typed lift adds source time, availability, version, and causal role evidence, thereby replacing $K$ by a refinement and shrinking the fiber.

This refinement can be expressed as an order on evidence structures. We write $K_2\succeq K_1$ when the observation generated under $K_2$ can reproduce the observation generated under $K_1$, or equivalently when there is a measurable degradation map $R$ such that $\Pi_{K_1}=R\circ\Pi_{K_2}$. In that case $K_2$ is at least as informative as $K_1$. For an observed law $p_2$ under $K_2$, let $p_1=R(p_2)$ be its induced law under $K_1$. Every $K_2$ fiber at $p_2$ is then contained in the corresponding $K_1$ fiber at $p_1$. AI typing, source sentence audit, availability adjudication, and successful source separation are useful precisely when they move the retrospective study upward in this order.

For a locked scalar prospective estimand $\Psi:\Theta\to\mathbb R$, the compatible estimand set is the image of the fiber:
\[
\mathfrak P_K(p)=\Psi\{\mathcal F_K(p)\}
=\{\Psi(\theta):\theta\in\mathcal F_K(p)\}.
\]
When the observed law is clear we write $\mathfrak P(K)$. The information radius is the Chebyshev radius of this image, with the center taken over the real line:
\[
r(K)=\inf_{c\in\mathbb{R}}\sup_{\psi\in\mathfrak P(K)}|\psi-c|.
\]
If $\mathfrak P(K)$ is a closed bounded interval, $r(K)$ is half its width. If $r(K)=0$, the typed evidence and assumptions identify a single prospective estimand. If $r(K)$ is large, the retrospective study has not reconstructed the prospective information set tightly enough for a single target claim, regardless of sample size.

The point estimation set defined earlier is now easy to interpret. A feature can enter $\mathcal H_{\theta,K}^{point}$ only when its remaining role set is nonempty and contained in $\mathcal A_{pre}$, at least one pre-treatment role is certified, and the complete representation is pre-treatment measurable. A feature with both pre treatment and downstream compatible roles enlarges the fiber and therefore belongs in $\mathfrak P(K)$ rather than in the point estimator. This is not merely a conservative convention; it follows from the fiber still containing worlds in which that feature changes the target.

\begin{theorem}[Exact fiber radius minimax duality for retrospective information structures]
Fix typed evidence $K$ and an observed law $p$. Suppose $\mathfrak P_K(p)$ is nonempty and compact. Consider estimators measurable with respect to the observed data generated under $p$ and the typed evidence $K$. Then
\[
\inf_{\widehat\Psi}\sup_{\theta\in\mathcal F_K(p)}
E_\theta\{\widehat\Psi-\Psi(\theta)\}^2
=
r^2(K).
\]
A Chebyshev center of $\mathfrak P_K(p)$ attains the upper bound.
\end{theorem}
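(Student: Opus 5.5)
The key fact is that every world in the fiber $\mathcal F_K(p)$ produces the same law $p$ of the observable data under $\Pi_K$. Consequently any estimator $\widehat\Psi$ measurable with respect to that data has a law that does not depend on which $\theta\in\mathcal F_K(p)$ is true. The plan is to use this to reduce the problem to a deterministic Chebyshev-center problem for the compact set $\mathfrak P_K(p)$, and then prove matching upper and lower bounds.

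\emph{Upper bound.} Write $\mathfrak P=\mathfrak P_K(p)$. Since $\mathfrak P$ is nonempty and compact in $\mathbb R$, set $a=\min\mathfrak P$ and $b=\max\mathfrak P$. Then the midpoint $c^\star=(a+b)/2$ is a Chebyshev center with $r(K)=(b-a)/2$. Take the constant estimator $\widehat\Psi\equiv c^\star$, which is trivially measurable. For every $\theta$ in the fiber we have $\Psi(\theta)\in\mathfrak P$, so $E_\theta\{c^\star-\Psi(\theta)\}^2\le r^2(K)$. This shows the infimum is at most $r^2(K)$ and that a Chebyshev center attains it.

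\emph{Lower bound.} Fix any admissible $\widehat\Psi$. By compactness, choose $\theta_a,\theta_b\in\mathcal F_K(p)$ with $\Psi(\theta_a)=a$ and $\Psi(\theta_b)=b$. Because both worlds induce the same observed law $p$, $\widehat\Psi$ has the same distribution under each. In particular it has a common mean $m$ and variance $v$, where we may assume finite second moment, since otherwise the risk is infinite and there is nothing to prove. The bias--variance decomposition then gives
\[
\max_{\theta\in\{\theta_a,\theta_b\}}E_\theta\{\widehat\Psi-\Psi(\theta)\}^2
= v+\max\{(m-a)^2,(m-b)^2\}\ge \Bigl(\frac{b-a}{2}\Bigr)^2=r^2(K).
\]
The final inequality holds because any real $m$ lies at distance at least $(b-a)/2$ from one of the two endpoints. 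Taking the supremum over the fiber and then the infimum over estimators yields the matching lower bound.

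\emph{Main obstacle.} The delicate step is the measurability claim, namely that the law of $\widehat\Psi$ is invariant across the fiber. I would state explicitly that the estimator is a measurable function of the observation $O\in\mathcal O_K$, possibly together with independent external randomization whose law is fixed. It must not depend on unobserved components of $\theta$, such as the latent role labels or potential outcomes. Its distribution is then the pushforward of $p$, or of $p$ times the fixed randomization law, and is therefore identical for every $\theta$ with $\Pi_K(\theta)=p$. This point carries the whole proof: the result is really a two-point argument that needs only $\theta_a$ and $\theta_b$, and it holds because the data cannot separate worlds within a fiber. A secondary point is that compactness guarantees the extremal worlds $\theta_a,\theta_b$ exist. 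Without it, one can instead take near-extremal worlds and pass to a limit, but the stated hypothesis avoids this.
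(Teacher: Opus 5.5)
Your proposal is correct and follows essentially the paper's argument: a constant estimator at the Chebyshev center gives the upper bound, and the bias--variance decomposition under the common fiber law gives the lower bound. The only difference is minor. The paper bounds $\sup_{\psi\in\mathfrak P_K(p)}(\mu_X-\psi)^2\ge r^2(K)$ directly from the definition of the Chebyshev radius, so its lower bound does not need compactness, whereas you restrict to the two extremal worlds; your explicit treatment of randomized estimators and infinite second moments covers points the paper leaves implicit.
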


\begin{proof}
Every world in $\mathcal F_K(p)$ induces the same distribution for the data available to the estimator. Let $X$ be the random output of any estimator under this common law, and let $\mu_X=E(X)$. For any $\psi\in\mathfrak P_K(p)$,
\[
E(X-\psi)^2=\operatorname{Var}(X)+(\mu_X-\psi)^2.
\]
Therefore
\[
\sup_{\psi\in\mathfrak P_K(p)}E(X-\psi)^2
\ge
\sup_{\psi\in\mathfrak P_K(p)}(\mu_X-\psi)^2
\ge r^2(K),
\]
because $r(K)$ is the smallest possible worst case distance from a real center to $\mathfrak P_K(p)$. This proves the lower bound. For the upper bound, choose a real Chebyshev center $c_K$ of $\mathfrak P_K(p)$, so that $\sup_{\psi\in\mathfrak P_K(p)}|\psi-c_K|=r(K)$, and use the constant estimator $\widehat\Psi\equiv c_K$. Its worst case squared error over the fiber is exactly $r^2(K)$. Compactness guarantees the center exists.
\end{proof}

The theorem gives the radius its statistical meaning. It is not only a sensitivity width. Its square is the exact minimax mean squared error left by the information structure for the locked scalar estimand. The radius is not a sampling variance term. Its square is the residual information-structure risk that remains when the typed evidence and the observed law still do not distinguish among compatible causal worlds. More observations from the same typed EHR law can reduce sampling error around an identified functional, but they cannot remove fiber uncertainty. Only refinement of the typed evidence, such as source time, availability, or causal role evidence, can shrink that part of the risk. In applications the fiber and its Chebyshev center are not known objects; they are approximated by role enumeration, validation-calibrated sensitivity envelopes, and sensitivity analyses. The theorem states what those reported sets are trying to estimate: the residual risk created by the information structure itself.

The exact theorem treats causal worlds that induce the same typed retrospective law. Finite sample analysis also faces worlds that are not identical but are nearly indistinguishable at the available sample size. For a fixed sample size $n$, let $P_{\theta,K}^{(n)}$ denote the joint law of the $n$ observations available under typed evidence $K$ in world $\theta$. For $0\le\varepsilon<1$, define the approximate information modulus
\[
\omega_{K,n}(\varepsilon)=
\sup\left\{
|\Psi(\theta)-\Psi(\theta')|:
 d_{\mathrm{TV}}\bigl(P_{\theta,K}^{(n)},P_{\theta',K}^{(n)}\bigr)\le \varepsilon,
\ \theta,\theta'\in\Theta
\right\}.
\]
This quantity asks how far the locked estimand can move between causal worlds that the typed retrospective experiment cannot reliably distinguish with $n$ observations. It is not a replacement for $r(K)$. The radius is the sharp exact-fiber quantity; the modulus is its finite sample near-fiber analogue. It also should not be read as a general theorem linking the global experiment distance $\Delta(\mathcal E_P,\mathcal E_\Lambda)$ to the target-specific radius. No global information projection or smoothness assumption is being imposed here; the modulus works directly with near-indistinguishability under the typed retrospective law.

\begin{proposition}[Approximate fiber modulus lower bound]
Fix $K$, $n$, and $0\le\varepsilon<1$. For every $\eta>0$ and every estimator measurable with respect to the typed retrospective data,
\[
\sup_{\theta\in\Theta}
E_\theta\{\widehat\Psi-\Psi(\theta)\}^2
\ge
\frac{1-\varepsilon}{8}\,\{\omega_{K,n}(\varepsilon)-\eta\}_+^2.
\]
Consequently, when $\omega_{K,n}(\varepsilon)<\infty$,
\[
\inf_{\widehat\Psi}\sup_{\theta\in\Theta}
E_\theta\{\widehat\Psi-\Psi(\theta)\}^2
\ge
\frac{1-\varepsilon}{8}\,\omega_{K,n}^2(\varepsilon).
\]
\end{proposition}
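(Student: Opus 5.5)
This is Le Cam's two-point method. We reduce the supremum over $\Theta$ to a supremum over two nearly indistinguishable worlds, and then bound the two-point risk with a testing argument.

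\textbf{Choosing the pair.} Fix $\eta>0$ and an estimator $\widehat\Psi$. If $\omega_{K,n}(\varepsilon)\le\eta$ the right side is zero and there is nothing to prove. Otherwise, by the definition of the supremum in $\omega_{K,n}(\varepsilon)$, we choose $\theta,\theta'\in\Theta$ with
\[
d_{\mathrm{TV}}\bigl(P_{\theta,K}^{(n)},P_{\theta',K}^{(n)}\bigr)\le\varepsilon
\qquad\text{and}\qquad
\Delta:=|\Psi(\theta)-\Psi(\theta')|\ge\omega_{K,n}(\varepsilon)-\eta.
\]
If $\omega_{K,n}(\varepsilon)=\infty$, we instead take $\Delta$ arbitrarily large, which gives the first display trivially. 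We write $P=P_{\theta,K}^{(n)}$ and $P'=P_{\theta',K}^{(n)}$. Since $\widehat\Psi$ is measurable with respect to the typed retrospective data, its law under $\theta$ is the pushforward of $P$. If the estimator uses external randomization, we include that randomization in the sample space; this leaves the total variation distance unchanged.

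\textbf{Two-point risk bound.} The supremum over $\Theta$ is at least the average of the two risks. Define the events
\[
A=\{|\widehat\Psi-\Psi(\theta)|\ge\Delta/2\},\qquad A'=\{|\widehat\Psi-\Psi(\theta')|\ge\Delta/2\}.
\]
By Markov's inequality,
\[
E_\theta\{\widehat\Psi-\Psi(\theta)\}^2\ge\frac{\Delta^2}{4}P(A),
\]
and likewise for $\theta'$ with $A'$. By the triangle inequality $A\cup A'$ is the whole space, so $A^c\subseteq A'$ and hence $P'(A')\ge P'(A^c)$. Therefore
\[
P(A)+P'(A')\ \ge\ P(A)+P'(A^c)\ \ge\ P(A)+P(A^c)-d_{\mathrm{TV}}(P,P')\ \ge\ 1-\varepsilon.
\]
Combining these bounds,
\[
\sup_{\Theta}E\{\widehat\Psi-\Psi\}^2\ \ge\ \tfrac12\cdot\frac{\Delta^2}{4}(1-\varepsilon)\ =\ \frac{1-\varepsilon}{8}\Delta^2\ \ge\ \frac{1-\varepsilon}{8}\{\omega_{K,n}(\varepsilon)-\eta\}_+^2.
\]
This is exactly the constant stated in the proposition.

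\textbf{Main obstacle: the ``Consequently'' clause.} The bound just proved holds for every estimator and every $\eta>0$, and its right side does not depend on $\widehat\Psi$. Taking the infimum over estimators therefore preserves it. We then let $\eta\downarrow0$, using $\omega_{K,n}(\varepsilon)<\infty$ so that $\{\omega-\eta\}_+^2\to\omega^2$. The one point needing care is the measurability bookkeeping in the two-point step: the inequality $P'(A^c)\ge P(A^c)-d_{\mathrm{TV}}(P,P')$ must be applied on the common space of typed observations. This is legitimate because $A$ is an event determined by the data, plus any randomization whose law is identical under $\theta$ and $\theta'$.
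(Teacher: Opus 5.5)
Your proposal is correct and follows essentially the same Le Cam two-point argument as the paper. Both proofs pick a near-extremal pair $\theta,\theta'$, use the total-variation bound to show the two error probabilities sum to at least $1-\varepsilon$, and arrive at the constant $(1-\varepsilon)/8$. The differences are cosmetic: you average the two risks over the covering events $A\cup A'$, whereas the paper takes the maximum over the ``closer to $\Psi(\theta)$'' partition. Your explicit handling of $\omega_{K,n}(\varepsilon)=\infty$ and of external randomization is a small bonus, since the paper's proof passes over both.
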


\begin{proof}
If $\omega_{K,n}(\varepsilon)=0$, the claim is immediate. Otherwise fix $\eta>0$ and choose $\theta,\theta'\in\Theta$ such that $d_{\mathrm{TV}}(P_{\theta,K}^{(n)},P_{\theta',K}^{(n)})\le\varepsilon$ and $\delta=|\Psi(\theta)-\Psi(\theta')|\ge\{\omega_{K,n}(\varepsilon)-\eta\}_+$. Let $A$ be the event that an estimator is closer to $\Psi(\theta)$ than to $\Psi(\theta')$, with ties assigned arbitrarily. On $A^c$ under $\theta$, the squared error is at least $\delta^2/4$; on $A$ under $\theta'$, the squared error is at least $\delta^2/4$. Hence the larger of the two risks is at least
\[
\frac{\delta^2}{4}\max\{P_{\theta,K}^{(n)}(A^c),P_{\theta',K}^{(n)}(A)\}.
\]
Moreover,
\[
P_{\theta,K}^{(n)}(A^c)+P_{\theta',K}^{(n)}(A)
=1-P_{\theta,K}^{(n)}(A)+P_{\theta',K}^{(n)}(A)
\ge 1-d_{\mathrm{TV}}(P_{\theta,K}^{(n)},P_{\theta',K}^{(n)})
\ge 1-\varepsilon.
\]
The maximum of the two probabilities is therefore at least $(1-\varepsilon)/2$, and the two-point lower bound is at least $(1-\varepsilon)\delta^2/8$. The displayed result follows from the choice of the pair and then from the arbitrariness of $\eta$.
\end{proof}

This lower bound is not an estimator construction and does not provide a finite sample confidence procedure. It identifies an unavoidable obstruction: if two compatible worlds are statistically close under the typed retrospective experiment while their locked estimands differ, no estimator can uniformly remove that ambiguity.

The proposition is the finite sample version of the same obstruction. If the typed evidence leaves two causal worlds nearly indistinguishable at the observed sample size, and those worlds imply different locked estimands, then no estimator can overcome that ambiguity. The exact radius theorem is sharper on a fixed fiber; the modulus lower bound explains how the same geometry persists when equality of laws is relaxed to near-indistinguishability. This is a local, target-specific statement about the typed retrospective experiment, not an attempt to turn the global experiment distance into an automatic formula for $r(K)$.

\begin{corollary}[Flat table nonidentification]\label{cor:flat_nonid}
Let $K_0$ contain only a flat table law for $(A,Z,Y)$ and no evidence about source time, availability, or causal role. There are laws for which $\mathcal F_{K_0}(p)$ contains a world in which $Z$ is pre treatment information, a world in which $Z$ is post treatment mediator information, and a world in which $Z$ is outcome adjacent proxy information. If the corresponding prospective estimands differ, then $r(K_0)>0$ and no estimator using only the flat table can identify the prospective causal effect.
\end{corollary}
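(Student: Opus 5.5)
The corollary has two parts: an existence claim about laws, and a consequence that follows from the exact fiber radius minimax duality theorem. I would prove the existence part by an explicit finite construction and then read off the consequence from that theorem.

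\emph{Step 1: the construction.} Take binary $A$, $Z$, $Y$. Fix one joint law $p$ of $(A,Z,Y)$ with full support, for example $Z$ dependent on $A$ and $Y$ dependent on both. I then build three causal worlds in $\Theta$ that each induce exactly $p$ on the flat table.
\begin{itemize}
\item In $\theta_B$, $Z$ is baseline: draw $Z\sim p(z)$, then $A\mid Z\sim p(a\mid z)$, then $Y\mid A,Z\sim p(y\mid a,z)$. The total effect is identified by standardization, $\Psi(\theta_B)=\sum_z\{p(1\mid 1,z)-p(1\mid 0,z)\}p(z)$, with exchangeability given $Z$ declared in this world.
\item In $\theta_M$, $Z$ is a mediator: draw $A\sim p(a)$ with no confounding, then $Z\mid A\sim p(z\mid a)$, then $Y\mid A,Z\sim p(y\mid a,z)$. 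The total effect is $\Psi(\theta_M)=p(Y=1\mid A=1)-p(Y=1\mid A=0)$.
\item In $\theta_Y$, $Z$ is an outcome proxy: draw $A\sim p(a)$, then $Y\mid A\sim p(y\mid a)$, then $Z\mid A,Y\sim p(z\mid a,y)$. This gives the same crude contrast, but through a different structure. If needed, I can break the tie by adding an unmeasured confounder in the $\theta_Y$ world, constructed so that it still reproduces $p$.
\end{itemize}
Each world factorizes $p$ by the chain rule, so it reproduces $p$ exactly. Under $K_0$ the observation map $\Pi_{K_0}$ records only the flat-table law, so all three worlds lie in $\mathcal F_{K_0}(p)$.

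\emph{Step 2: the estimands differ.} I need $\Psi(\theta_B)\neq\Psi(\theta_M)$, which is a Simpson-type example. Choose $p$ so that the standardized contrast and the crude contrast differ. A concrete numeric choice is easy: let $Z$ raise both $P(A=1)$ and $P(Y=1)$, and give $A$ no effect within strata of $Z$. Then $\Psi(\theta_B)=0$ while $\Psi(\theta_M)>0$.

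\emph{Step 3: radius and nonidentification.} The image $\mathfrak P_{K_0}(p)$ contains two distinct reals $\psi_1\neq\psi_2$. For any center $c$,
\[
\max\{|\psi_1-c|,|\psi_2-c|\}\ge|\psi_1-\psi_2|/2,
\]
so $r(K_0)\ge|\psi_1-\psi_2|/2>0$. This bound follows directly from the definition of the Chebyshev radius and needs no compactness. For nonidentification, repeat the lower-bound half of the proof of the exact duality theorem restricted to the two worlds. Any estimator has the same output law under both, so its worst-case squared error is at least $r^2(K_0)>0$ at every sample size. Hence no flat-table estimator is consistent for $\Psi$ uniformly over the fiber, and $\Psi$ is not identified.

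\emph{Main obstacle.} The hardest point is making sure the three worlds are admissible members of the declared class $\Theta$. That class may impose exchangeability or role restrictions, and the three worlds carry different structural assumptions. I would handle this by taking $\Theta$ under $K_0$ to be the union of role-specific models, which is exactly what "no evidence about causal role" means. I would also note that the corollary only asserts existence ("there are laws"), so a single explicit binary example suffices.
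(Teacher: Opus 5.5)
Your proposal is correct and follows essentially the paper's proof: the same three chain-rule factorizations of one positive flat-table law, followed by the radius lower bound. Two refinements go beyond the paper: your explicit Simpson-type example guaranteeing $\Psi(\theta_B)\neq\Psi(\theta_M)$, and your remark that the proxy world needs latent treatment selection to differ from the mediator world, which matches the paper's finite-world construction in Section~\ref{sec:finite_frontier}. One minor slip: restricting the lower-bound argument to two worlds gives $\{(\psi_1-\psi_2)/2\}^2$ rather than $r^2(K_0)$, which needs the supremum over the whole fiber, but both are positive, so the conclusion stands.
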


\begin{proof}
Given a positive law $P(A,Z,Y)$, one compatible construction draws $Z\sim P(Z)$, $A\sim P(A\mid Z)$, and $Y\sim P(Y\mid A,Z)$; this interprets $Z$ as pre treatment information. A second construction draws $A\sim P(A)$, $Z\sim P(Z\mid A)$, and $Y\sim P(Y\mid A,Z)$; this interprets $Z$ as post treatment. A third draws $(A,Y)\sim P(A,Y)$ and then $Z\sim P(Z\mid A,Y)$; this interprets $Z$ as outcome adjacent. These worlds share the same flat table but can imply different prospective target estimands. Their images under $\Psi$ therefore have positive diameter whenever the estimands differ, and the exact radius theorem applies.
\end{proof}

\begin{proposition}[Monotonicity under typed evidence refinement]
If $K_2\succeq K_1$ with degradation map $R$ satisfying $\Pi_{K_1}=R\circ\Pi_{K_2}$, then for every observed law $p_2$ under $K_2$ and its induced law $p_1=R(p_2)$ under $K_1$,
\[
\mathcal F_{K_2}(p_2)\subseteq\mathcal F_{K_1}(p_1),
\]
and therefore
\[
\mathfrak P(K_2)\subseteq\mathfrak P(K_1),\qquad r(K_2)\le r(K_1),\qquad r^2(K_2)\le r^2(K_1).
\]
If $K_P$ is the prospective evidence structure and $K_2\succeq K_1$, then
\[
G_\Psi(K_2;K_P)\le G_\Psi(K_1;K_P)
\]
whenever both gaps are defined.
\end{proposition}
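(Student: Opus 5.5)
The plan is to argue directly from the definitions, in three layers: fibers, then images, then radii and gaps. First take $\theta\in\mathcal F_{K_2}(p_2)$, so that $\Pi_{K_2}(\theta)=p_2$. Applying the degradation map gives
\[
\Pi_{K_1}(\theta)=R\{\Pi_{K_2}(\theta)\}=R(p_2)=p_1,
\]
and hence $\theta\in\mathcal F_{K_1}(p_1)$. This shows $\mathcal F_{K_2}(p_2)\subseteq\mathcal F_{K_1}(p_1)$. The only care needed is that $R$ acts on laws, which is exactly how the factorization $\Pi_{K_1}=R\circ\Pi_{K_2}$ is stated, so no measurability argument is required beyond what that hypothesis already supplies.

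Next, taking images under $\Psi$ preserves inclusion, so $\mathfrak P_{K_2}(p_2)\subseteq\mathfrak P_{K_1}(p_1)$. For the radius, fix any center $c\in\mathbb R$. A supremum over a subset is no larger than the supremum over the superset, so
\[
\sup_{\psi\in\mathfrak P(K_2)}|\psi-c|\le\sup_{\psi\in\mathfrak P(K_1)}|\psi-c|.
\]
Taking the infimum over $c$ on both sides gives $r(K_2)\le r(K_1)$. Squaring is monotone on $[0,\infty]$, so $r^2(K_2)\le r^2(K_1)$.

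Two conventions need fixing. If $\mathfrak P(K_2)$ is empty, I would adopt $r=0$ (the supremum over the empty set is $0$ for distances), and the inequality is trivial. If $\mathfrak P(K_1)$ is unbounded, then $r(K_1)=\infty$ and the inequality again holds. No compactness is needed here, unlike in the exact radius theorem; that theorem is only needed to read the conclusion as an ordering of minimax risks, where both images are nonempty and compact.

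For the gap, $G_\Psi(K;K_P)=r^2(K)-r^2(K_P)$ subtracts the same quantity $r^2(K_P)$ from both terms. Whenever both gaps are defined, which in particular requires $r(K_P)<\infty$, the inequality $G_\Psi(K_2;K_P)\le G_\Psi(K_1;K_P)$ follows by subtracting $r^2(K_P)$ from $r^2(K_2)\le r^2(K_1)$.

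The only real obstacle is bookkeeping. The notation $\mathfrak P(K)$ suppresses the observed law, so I must state that the comparison uses $p_2$ under $K_2$ and the induced law $p_1=R(p_2)$ under $K_1$, rather than an arbitrary law under $K_1$. I must also confirm that both radii are computed for the same locked $\Psi$ and the same declared class $\Theta$. Both points are fixed by the hypotheses of the proposition.
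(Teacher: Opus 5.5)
Your proposal is correct and follows the paper's proof essentially step for step: fiber inclusion via $\Pi_{K_1}=R\circ\Pi_{K_2}$, image inclusion under $\Psi$, monotonicity of the Chebyshev radius on subsets, order preservation when squaring nonnegative radii, and subtraction of the common $r^2(K_P)$. Your added conventions (radius $0$ for an empty image, $r=\infty$ for an unbounded one, and $r(K_P)<\infty$ for the gaps to be defined) are harmless refinements that the paper leaves implicit.
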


\begin{proof}
If $\theta\in\mathcal F_{K_2}(p_2)$, then $\Pi_{K_2}(\theta)=p_2$, so $\Pi_{K_1}(\theta)=R\{\Pi_{K_2}(\theta)\}=R(p_2)=p_1$. Hence $\theta\in\mathcal F_{K_1}(p_1)$. Taking the image of this inclusion under $\Psi$ preserves inclusion. The Chebyshev radius of a subset cannot exceed that of the original set. Squaring preserves the order because the radii are nonnegative. The statement about $G_\Psi$ follows by subtracting the same prospective radius $r^2(K_P)$ from both sides.
\end{proof}

The proposition is the formal role of AI typing and certification. A certificate does not create trust by assertion. It refines the evidence structure and removes causal worlds from the observational fiber. What remains is retained as a compatible estimand set.

\subsection*{Value of typing and value of certification}

Let $K_0$ denote the untyped state of the study: the flat EHR feature table with no usable evidence about source time, availability, or causal role. Let $K_\theta$ be the evidence produced by the AI typed lift $\Lambda_\theta$. For a finite reporting scale $M>0$, define $r_M(K)=\min\{r(K),M\}$. The radius scale value of AI typing is
\[
\operatorname{VoT}_M(\Lambda_\theta)=r_M(K_0)-r_M(K_\theta).
\]
The corresponding capped squared-radius reporting score is
\[
\operatorname{VoT}_{\mathrm{risk},M}(\Lambda_\theta)=r_M^2(K_0)-r_M^2(K_\theta).
\]
This equals the reduction in exact minimax mean squared error only when the cap is inactive at both evidence states. In general it is a capped reporting score, not the minimax risk reduction for the original squared loss. Define
\[
G_{\Psi,M}(K;K_P)=r_M^2(K)-r_M^2(K_P).
\]
For a fixed prospective reference, the corresponding change is
\[
\operatorname{VoT}_{\mathrm{gap},M}(\Lambda_\theta;K_P)
=G_{\Psi,M}(K_0;K_P)-G_{\Psi,M}(K_\theta;K_P),
\]
which equals the capped squared-radius score when the same prospective reference is used. If the lift is stochastic before locking, the design stage analogue is the expectation of this reduction under the declared distribution of the lift.

Additional evidence can be evaluated in the same geometry. If $E$ is a source sentence audit, availability adjudication, chart review subset, repeated extraction audit, external validation sample, or prospective substudy, write $K\vee E$ for the refined evidence. The value of certification is
\[
\operatorname{VoC}_M(E\mid K)=r_M(K)-r_M(K\vee E),
\]
with capped squared-radius reporting version
\[
\operatorname{VoC}_{\mathrm{risk},M}(E\mid K)=r_M^2(K)-r_M^2(K\vee E).
\]
When a cost $C(E)$ is available, the next validation action can be prioritized by
\[
E^\star\in\arg\max_E \frac{\operatorname{VoC}_M(E\mid K)}{C(E)},
\]
or by replacing the numerator with the capped squared-radius score. The certification score is an exact minimax risk reduction only when its cap is inactive at both states. A wide radius is therefore not only a stopping rule. It is a design diagnosis that says which evidence would shrink the fiber most.

\subsection*{Estimating a compatible reporting set}

The exact fiber radius is a population object. A compatible reporting set requires a declared identification, validation, and sensitivity model. Fix a finite collection of remaining compatible role assignments $\mathcal C(K)$, with $m=|\mathcal C(K)|\ge1$, before using the inference sample. Each assignment specifies a nominal functional $\Psi_t^0(p)$ for the same scientific estimand $\Psi$ under the same target-trial lock. A direct-effect target and a total-effect target must not be pooled as though they were alternative values of one locked estimand.

Choose error budgets $\alpha_S+\alpha_V\le\alpha$. Suppose the nominal estimates have valid marginal Wald intervals under each relevant world, uniformly as needed for the stated coverage claim. A Bonferroni choice is $c_S=z_{1-\alpha_S/(2m)}$; a justified simultaneous procedure may replace it. With estimand-scale envelopes $\widehat B_t(K)$, report
\[
\widehat{\mathfrak P}_{1-\alpha}(K)=
\bigcup_{t\in\mathcal C(K)}
\left[\widehat\Psi_t-\widehat B_t(K)-c_S\widehat{se}_t,
\widehat\Psi_t+\widehat B_t(K)+c_S\widehat{se}_t\right].
\]
If a scalar summary is required, use the center of its enclosing interval,
\[
\widehat c(K)=\frac{\inf\widehat{\mathfrak P}_{1-\alpha}(K)+\sup\widehat{\mathfrak P}_{1-\alpha}(K)}{2},\qquad
\widehat r_{1-\alpha}(K)=\frac{\sup\widehat{\mathfrak P}_{1-\alpha}(K)-\inf\widehat{\mathfrak P}_{1-\alpha}(K)}{2}.
\]
The union can be disconnected and need not contain its enclosing-interval center. The set remains the primary causal result. Its reporting radius combines sampling and information uncertainty; these components should also be displayed separately when they can be assessed.

A candidate validation or sensitivity envelope is
\[
\widehat B_t(K)=
\widehat B_{\mathrm{extract},t}+\widehat B_{\mathrm{avail},t}+
\widehat B_{\mathrm{sep},t}+\widehat B_{\mathrm{obs},t}+
\widehat B_{\mathrm{judge},t}+\widehat B_{\mathrm{pos},t}.
\]
A raw extraction error rate is not an estimand-scale bound. Coverage requires a map from the declared discrepancies to the locked estimand.

\begin{proposition}[Calibration of an estimand-scale validation envelope]
Fix $p$, a locked evidence rule $K$, and the finite assignment set $\mathcal C(K)$. Suppose an event $V$ has probability at least $1-\alpha_V$ and the following statements hold simultaneously on $V$ for every $\theta\in\mathcal F_K(p)$. Some retained assignment $t$ represents $\theta$, and there is a path from its nominal target $\Psi_t^0(p)$ to $\Psi(\theta)$ such that, for each component $j$,
\[
d_j(\theta)\le\delta_j,\qquad
|\Psi(\theta^{(j)})-\Psi(\theta^{(j-1)})|\le L_{jt}d_j(\theta).
\]
If $\widehat B_t(K)\ge\sum_j L_{jt}\delta_j$ on $V$ and all retained nominal-target sampling intervals cover simultaneously with probability at least $1-\alpha_S$, then
\[
P\{\mathfrak P_K(p)\subseteq\widehat{\mathfrak P}_{1-\alpha}(K)\}
\ge1-\alpha_S-\alpha_V.
\]
If the sampling guarantee is asymptotic, the conclusion has an additional $o(1)$ term. If the validation and representation conditions hold only for the actual world rather than every world in the fiber, the conclusion is point coverage of its target, not coverage of the entire compatible image.
\end{proposition}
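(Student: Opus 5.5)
The plan is to fix an arbitrary world in the fiber and show that, on the intersection of the validation event and the sampling-coverage event, its estimand lies in the reported union. A union bound on the two failure probabilities then gives the stated level. Let $S$ denote the event that every retained nominal-target sampling interval covers its target simultaneously. That is, for every $t\in\mathcal C(K)$,
\[
|\widehat\Psi_t-\Psi_t^0(p)|\le c_S\widehat{se}_t .
\]
By assumption $P(S)\ge 1-\alpha_S$ and $P(V)\ge1-\alpha_V$. Hence $P(S\cap V)\ge 1-\alpha_S-\alpha_V$, since $P(S^c\cup V^c)\le\alpha_S+\alpha_V$. It then suffices to show that $S\cap V\subseteq\{\mathfrak P_K(p)\subseteq\widehat{\mathfrak P}_{1-\alpha}(K)\}$.

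Work on $S\cap V$ and take any $\psi\in\mathfrak P_K(p)$, so $\psi=\Psi(\theta)$ for some $\theta\in\mathcal F_K(p)$. By hypothesis there is a retained assignment $t$ representing $\theta$. There is also a path $\theta^{(0)},\ldots,\theta^{(J)}$ with $\Psi(\theta^{(0)})=\Psi_t^0(p)$ and $\theta^{(J)}=\theta$. I read the statement as saying that the path starts at a world whose estimand equals the nominal target and ends at $\theta$; I will state this convention explicitly. A telescoping sum and the triangle inequality give
\[
|\Psi(\theta)-\Psi_t^0(p)|\le\sum_j|\Psi(\theta^{(j)})-\Psi(\theta^{(j-1)})|\le\sum_j L_{jt}d_j(\theta)\le\sum_j L_{jt}\delta_j\le\widehat B_t(K),
\]
where every inequality uses only facts that hold on $V$. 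Combining this with the sampling bound on $S$ gives
\[
|\Psi(\theta)-\widehat\Psi_t|\le\widehat B_t(K)+c_S\widehat{se}_t .
\]
So $\psi$ lies in the $t$-th interval of the union, and hence in $\widehat{\mathfrak P}_{1-\alpha}(K)$. Since $\psi$ was arbitrary, the whole image is covered on $S\cap V$.

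The main obstacle is the quantifier order. The realization on $V$ must serve every $\theta$ in the fiber simultaneously, possibly uncountably many. This is why the hypothesis is phrased as "simultaneously on $V$ for every $\theta$". The finite assignment set is also essential: it means $S$ involves only $m$ intervals, which the Bonferroni choice $c_S=z_{1-\alpha_S/(2m)}$ controls. I would stress that no measurability of $\{\mathfrak P_K(p)\subseteq\widehat{\mathfrak P}_{1-\alpha}(K)\}$ is needed beyond interpreting the probability as an inner probability. The argument only uses that this event contains the measurable event $S\cap V$.

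For the remarks, two adjustments follow directly. If $P(S)\ge1-\alpha_S-o(1)$ only asymptotically, the same union bound carries the $o(1)$ term through. If the path and validation conditions are assumed only for the true world $\theta^\star$, the argument above applies verbatim to that single $\theta^\star$. It then yields $P\{\Psi(\theta^\star)\in\widehat{\mathfrak P}_{1-\alpha}(K)\}\ge1-\alpha_S-\alpha_V$, which is point coverage rather than coverage of the whole image.
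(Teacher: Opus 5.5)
Your proof is correct and follows the paper's argument: on the intersection of the validation event and the simultaneous-sampling event, you telescope along each world's declared path, then combine the triangle inequality with the sampling interval to place $\Psi(\theta)$ in the $t$-th interval, and you finish with a union bound that requires no independence. Your extra remarks, on the path-endpoint convention and on reading the coverage probability as an inner probability, are harmless refinements that the paper leaves implicit.
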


\begin{proof}
On $V$, telescope along each world's declared path. The triangle inequality bounds its target's distance from the retained nominal value by $\sum_jL_{jt}\delta_j$. On the simultaneous sampling event, that nominal value lies in its sampling interval, so expanding by $\widehat B_t(K)$ covers the world's target. Applying this argument to every world gives image coverage. The two events fail with probability at most $\alpha_S+\alpha_V$, without requiring independence.
\end{proof}

For example, if $Y\in[y_{min},y_{max}]$ with range $R_Y$ and the arm-specific interventional outcome laws differ from their nominal laws by total variation distances at most $\delta_1$ and $\delta_0$, then the treatment-contrast envelope is $R_Y(\delta_1+\delta_0)$. The MNAR construction in Appendix~C is another explicit estimand-scale map. In contrast, an extraction error rate or availability disagreement rate alone does not determine $L_{jt}\delta_j$; without a transport or calibration model linking that rate to the relevant interventional law, the component must be labeled a sensitivity allowance and cannot be claimed to have coverage.

These envelopes are validation or sensitivity quantities, not hidden tuning parameters. They should be reported even when set to zero. Each component that is claimed as a bound must identify $d_j$, $\delta_j$, $L_{jt}$, and its validation event; otherwise it remains a declared sensitivity allowance. Reviewer disagreement is handled in the same geometry. If clinicians or epidemiologic reviewers disagree about a feature's source window, availability, or role, that disagreement widens the compatible set through $\widehat B_{\mathrm{judge},t}$ or through a larger role set; it is not treated as an ordinary sampling standard error and it cannot by itself justify entry into the point estimator. In discrete implementations, lower and upper compatible bounds can also be written as constrained bounding problems,
\[
\underline\Psi(K)=\inf_{\theta\in\Theta(P_O,K,\mathcal V)}\Psi(\theta),\qquad
\overline\Psi(K)=\sup_{\theta\in\Theta(P_O,K,\mathcal V)}\Psi(\theta),
\]
where $\mathcal V$ encodes validation constraints such as source separation error, availability error, extraction uncertainty, and positivity bounds. The radius is then $(\overline\Psi(K)-\underline\Psi(K))/2$. Depending on the estimand and discretization, this bounding problem may be linear, linear fractional, or nonlinear; the framework requires that the constraints be stated rather than hidden in a point estimate.

\begin{table}[H]
\centering
\caption{Components of a validation-calibrated sensitivity envelope for a compatible reporting set. A component is a coverage bound only when its discrepancy, validation limit, and estimand modulus are stated.}
\label{tab:validation_components}
\scriptsize
\resizebox{\textwidth}{!}{%
\begin{tabular}{>{\raggedright\arraybackslash}p{0.17\textwidth}>{\raggedright\arraybackslash}p{0.32\textwidth}>{\raggedright\arraybackslash}p{0.25\textwidth}>{\raggedright\arraybackslash}p{0.20\textwidth}}
\toprule
Component & Operational source & Example diagnostic & Effect on reporting \\
\midrule
$B_{\mathrm{extract}}$ & Manual annotation subset, repeated AI extraction, external NLP validation & feature error rate, source sentence precision, role disagreement & widens role specific intervals involving extracted features \\
$B_{\mathrm{avail}}$ & Adjudication of whether the fact was known at treatment decision time & proportion of features with pre clinical time but uncertain availability & separates baseline state use from decision information use \\
$B_{\mathrm{sep}}$ & Source separation audit on mixed or post index notes & fraction of outputs that change when post treatment sentences are masked & excludes or bounds features from late notes \\
$B_{\mathrm{obs}}$ & Visit, test, note, and follow up process sensitivity analysis & change in estimate under alternative observation models & records remaining observation process drift \\
$B_{\mathrm{judge}}$ & Chart review, clinician adjudication, or prospective substudy for tacit judgement & residual disagreement about performance status or treatment rationale & leaves clinician judgement as compatible uncertainty \\
$B_{\mathrm{pos}}$ & Overlap diagnostics after reconstruction & effective sample size, extreme weights, sparse strata & prevents precise estimates in unsupported information strata \\
\bottomrule
\end{tabular}%
}
\end{table}

Table~\ref{tab:radius_shrink} lists practical ways to shrink the radius. The list is also a stopping rule. If none of these actions is possible and the radius remains large, the analysis should not be reported as a point identified prospective approximation causal effect.

\begin{table}[H]
\centering
\caption{Practical routes for reducing the information radius.}
\label{tab:radius_shrink}
\resizebox{\textwidth}{!}{%
\begin{tabular}{>{\raggedright\arraybackslash}p{0.28\textwidth}>{\raggedright\arraybackslash}p{0.34\textwidth}>{\raggedright\arraybackslash}p{0.31\textwidth}}
\toprule
Uncertainty source & How to reduce it & What remains if not reduced \\
\midrule
Clinical time ambiguity & Time aware extraction, note section tagging, chart review of source sentence and event time & Competing baseline and post treatment roles \\
Availability ambiguity & Verify that the fact was known at decision time through prior note, order, structured field, or clinician review & Feature can reconstruct baseline state but not decision information \\
Outcome proxy risk & Exclude outcome adjacent sections, discharge summaries, and endpoint definitions from baseline extraction & Role set retains $Y$ and primary point identification fails \\
Observation process ambiguity & Record visit frequency, test frequency, note density, censoring, and follow up as observation variables & Drift through the observation mechanism remains \\
Extractor instability & Lock model, prompt, schema, seed, and preprocessing; repeat extraction to estimate instability & Measurement or generation error widens $\widehat B_t(K)$ \\
Unrecorded clinician judgement & Chart review, prospective substudy, external validation, or sensitivity parameter for residual judgement & Compatible set remains wide \\
\bottomrule
\end{tabular}%
}
\end{table}
\section{Causal effect identification and estimation}

Point estimation begins only after information set emulation has done enough work. The reconstructed point information set is useful because it supports a causal estimand and an estimator, but the estimator is not the primary innovation. The primary innovation is the typed lift that determines which features may enter $\mathcal H_{\theta,K}^{point}$, which features must remain in $J_{amb}(K)$, and whether the radius says a point estimator is scientifically adequate. Let $H_i$ denote a finite representation that generates $\mathcal H_{\theta,K}^{point}$ for patient $i$ in the EHR target frame. The target distribution $Q_K$ may be the empirical distribution of $H_i$ among eligible EHR patients, a transported distribution, or another distribution fixed in the target trial lock.

The typed lift does not make treatment assignment ignorable. It determines which information can be admitted into the pre treatment information set and which information must remain outside the point estimator. Exchangeability is a separate scientific and design assumption about that admitted information set. Thus the certificate prevents post treatment, outcome adjacent, or observation process features from being treated as baseline conditions; it does not assert that all residual clinician judgement, site practice, severity, or observation mechanisms have been measured. In short, the typed lift defines the information set relative to which exchangeability is assessed; it does not assert exchangeability.

The operational causal effect after information set emulation is
\[
\Psi_K=\int \{m_1^K(h)-m_0^K(h)\}\,dQ_K(h),
\]
where
\[
m_a^K(h)=E\{Y(a)\mid H=h\}.
\]
This estimand is the effect identified by the certified EHR information set. It coincides with the prospective study estimand only when the reconstructed information set is sufficiently close to the prospective information set for the scientific purpose at hand. The remaining distance is handled later through compatible estimands and the certificate conditional approximation frontier.

For point identification, assume the following conditions under the locked target trial. First, certified pre treatment measurability holds: $H$ is fixed before treatment and $H(1)=H(0)$. Second, consistency holds for the treatment versions being compared. Third, information set exchangeability holds:
\[
A\perp \{Y(1),Y(0)\}\mid H.
\]
Fourth, when outcomes can be missing, observation exchangeability holds:
\[
\Delta\perp \{Y(1),Y(0)\}\mid A,H,
\]
where $\Delta=1$ indicates that the outcome is observed. Fifth, positivity holds for treatment and observation on the support of $Q_K$:
\[
0<P(A=a\mid H=h)<1,
\qquad
P(\Delta=1\mid A=a,H=h)>0.
\]

Observation exchangeability is often the least credible condition in EHR studies. If outcome observation may depend on unrecorded severity, clinician judgement, or workflow intensity, the point identified analysis should be replaced by an observation sensitivity family rather than treated as verified. Let $U_{obs}$ denote an unmeasured observation driver. Define the selection ratio
\[
L_a^\theta(H,U_{obs})=
\frac{P_\theta(\Delta=1\mid A=a,H,U_{obs})}
{P_\theta(\Delta=1\mid A=a,H)},
\]
which measures the multiplicative dependence of outcome observation on the unmeasured driver after conditioning on the admitted information set. Let $\Theta_{obs}(\Gamma)$ be the declared class in which this dependence is bounded,
\[
\exp(-\Gamma_a)\le L_a^\theta(H,U_{obs})\le \exp(\Gamma_a),\qquad a=0,1.
\]
The corresponding compatible set and radius are
\[
\mathfrak P_{\Gamma}(K)=\Psi\{\mathcal F_{K,\Gamma}(p)\},
\qquad
r_{\Gamma}(K)=\inf_{c\in\mathbb R}
\sup_{\psi\in\mathfrak P_{\Gamma}(K)}|\psi-c|.
\]
If $\Gamma_1\le\Gamma_2$ componentwise, then the sensitivity class expands, $\mathfrak P_{\Gamma_1}(K)\subseteq\mathfrak P_{\Gamma_2}(K)$, and $r_{\Gamma_1}(K)\le r_{\Gamma_2}(K)$. Thus departures from observation exchangeability are reported as an information-radius curve, not hidden inside a single complete-case point estimate.

Define
\[
\pi_a(h)=P(A=a\mid H=h),
\]
\[
\rho_a(h)=P(\Delta=1\mid A=a,H=h),
\]
\[
\mu_a(h)=E(Y\mid A=a,\Delta=1,H=h).
\]

\begin{theorem}[Identification after information set emulation]
Under certified pre treatment measurability, consistency, information set exchangeability, observation exchangeability, and positivity,
\[
\Psi_K=\int\{\mu_1(h)-\mu_0(h)\}\,dQ_K(h).
\]
\end{theorem}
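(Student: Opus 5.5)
The plan is the standard g-formula argument applied pointwise in $h$ and then integrated against $Q_K$. Since $\Psi_K=\int\{m_1^K(h)-m_0^K(h)\}\,dQ_K(h)$ by definition, it suffices to show that $m_a^K(h)=\mu_a(h)$ for each $a\in\{0,1\}$ and $Q_K$-almost every $h$. Linearity of the integral then gives the displayed identity. Positivity on the support of $Q_K$ guarantees that the conditioning events $\{A=a,H=h\}$ and $\{A=a,\Delta=1,H=h\}$ have positive probability, or positive density in the continuous case. Hence $\mu_a(h)$ is a well-defined functional of the observed law wherever $Q_K$ puts mass. I would state this first so that every subsequent conditional expectation is meaningful.

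The pointwise chain for fixed $a$ and $h$ runs as follows. First, certified pre-treatment measurability ($H$ fixed before treatment, $H(1)=H(0)$) ensures that conditioning on $H=h$ is conditioning on a common baseline quantity rather than on an arm-specific post-treatment variable. Then
\[
m_a^K(h)=E\{Y(a)\mid H=h\}=E\{Y(a)\mid A=a,H=h\},
\]
by information set exchangeability $A\perp\{Y(1),Y(0)\}\mid H$. Next,
\[
E\{Y(a)\mid A=a,H=h\}=E\{Y(a)\mid A=a,\Delta=1,H=h\},
\]
by observation exchangeability $\Delta\perp\{Y(1),Y(0)\}\mid A,H$. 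Finally, consistency replaces $Y(a)$ by the observed $Y$ on $\{A=a\}$, so the last expression equals $\mu_a(h)$.

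The main obstacle is measure-theoretic care, not the algebra. Two points need attention. The first is that the identity must hold $Q_K$-almost everywhere while the conditional expectations are defined only $P_H$-almost everywhere. I would handle this by requiring $Q_K\ll P_H$, which is implicit in the positivity statement "on the support of $Q_K$". If $Q_K$ is a transported distribution, this absolute continuity must be part of the lock. The second is integrability: I would assume $E|Y(a)|<\infty$, or a bounded outcome, so that the conditional expectations exist and the integral against $Q_K$ is finite. I would also remark that if $H$ contained routing or missingness indicators that were not pre-treatment measurable, the first step would fail. This is exactly why the routing rule of the earlier sections is needed before the theorem applies.
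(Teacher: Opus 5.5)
Your proposal is correct and is essentially the paper's proof. It uses the same pointwise chain, with consistency, observation exchangeability and information-set exchangeability giving $m_a^K(h)=\mu_a(h)$, traversed in the opposite direction and then integrated against $Q_K$. Your measure-theoretic additions (integrability and $Q_K\ll P_H$) are sensible refinements that the paper leaves implicit, with one caveat: $Q_K\ll P_H$ fails for the paper's default empirical $Q_K$ when $H$ is continuous, and there the identity instead holds almost surely because the realized $H_i$ avoid any fixed $P_H$-null exceptional set.
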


\begin{proof}
By consistency, among patients with $A=a$ and observed outcomes, $Y=Y(a)$. Observation exchangeability gives
\[
E(Y\mid A=a,\Delta=1,H=h)=E\{Y(a)\mid A=a,H=h\}.
\]
Information set exchangeability then gives
\[
E\{Y(a)\mid A=a,H=h\}=E\{Y(a)\mid H=h\}=m_a^K(h).
\]
Substituting the equality for both treatment levels into the definition of $\Psi_K$ gives the result.
\end{proof}

For estimation, first consider one episode per patient and independent, identically distributed patients from the eligible EHR frame. In this paragraph all expectations refer to that frame. Write $\tau_K(h)=\mu_1(h)-\mu_0(h)$. Distinguish the population target $\Psi_K=E\{\tau_K(H)\}$ from the empirical-covariate target $\Psi_{K,n}=n_E^{-1}\sum_i\tau_K(H_i)$. The following theorem concerns these two targets; a transported $Q_K$ requires its own weighting and target-distribution influence component.

The augmented inverse probability weighted contribution is
\[
\phi_i^K=
\frac{1(A_i=1)\Delta_i}{\pi_1(H_i)\rho_1(H_i)}\{Y_i-\mu_1(H_i)\}
-\frac{1(A_i=0)\Delta_i}{\pi_0(H_i)\rho_0(H_i)}\{Y_i-\mu_0(H_i)\}
+\tau_K(H_i).
\]
Define the residual contribution $R_i^K=\phi_i^K-\tau_K(H_i)$. The estimator is
\[
\widehat\Psi_K=n_E^{-1}\sum_{i=1}^{n_E}\widehat\phi_i^K,
\]
where every fitted nuisance value is evaluated out of fold. The estimator can be used for either target, but its first-order variance depends on which target is reported. This is an application of orthogonal estimation \citep{dml2018}.

\begin{theorem}[Orthogonal estimation after information set emulation]\label{thm:orthogonal}
Assume the identification conditions above and an externally fixed information-set rule, so that the evaluation patients are i.i.d. conditional on that rule. Assume $E\{\tau_K(H)^2\}<\infty$, uniformly bounded conditional residual second moments $E[\{Y-\mu_a(H)\}^2\mid A=a,\Delta=1,H]$, and true and fitted treatment and observation probabilities bounded below by a fixed positive constant. Use a fixed number of cross-fitting folds whose sizes are proportional to $n_E$. Let $\|\cdot\|$ be the $L_2$ norm under the eligible-frame law of $H$. For each arm and each training fold, require
\[
\|\widehat\mu_a-\mu_a\|+\|\widehat\pi_a-\pi_a\|+
\|\widehat\rho_a-\rho_a\|=o_p(1),
\]
and
\[
\|\widehat\mu_a-\mu_a\|
\bigl(\|\widehat\pi_a-\pi_a\|+\|\widehat\rho_a-\rho_a\|\bigr)
=o_p(n_E^{-1/2}).
\]
Then
\[
\sqrt{n_E}(\widehat\Psi_K-\Psi_K)
=n_E^{-1/2}\sum_i(\phi_i^K-\Psi_K)+o_p(1),
\]
whereas
\[
\sqrt{n_E}(\widehat\Psi_K-\Psi_{K,n})
=n_E^{-1/2}\sum_iR_i^K+o_p(1).
\]
For positive limiting variances, the limiting laws are normal with variances
$V_{pop}=\operatorname{Var}(\phi_i^K)$ and $V_{emp}=E\{(R_i^K)^2\}$, respectively. In particular,
\[
V_{pop}=V_{emp}+\operatorname{Var}\{\tau_K(H)\}.
\]
With $\widehat R_i^K=\widehat\phi_i^K-
\{\widehat\mu_1(H_i)-\widehat\mu_0(H_i)\}$ and $\overline{\widehat R}=n_E^{-1}\sum_i\widehat R_i^K$, consistent variance estimates are
\[
\widehat V_{pop}=n_E^{-1}\sum_i(\widehat\phi_i^K-\widehat\Psi_K)^2,
\qquad
\widehat V_{emp}=n_E^{-1}\sum_i(\widehat R_i^K-\overline{\widehat R})^2.
\]
The corresponding standard error is $\sqrt{\widehat V/n_E}$. The empirical-target statement concerns the marginal error around the random covariate-average target; a conditional-on-covariates coverage statement additionally requires a conditional central limit theorem.
\end{theorem}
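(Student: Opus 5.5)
We follow the standard cross-fitted double machine learning argument of \citet{dml2018}, adapted to the product weight $\pi_a\rho_a$. Write $\phi(O;\eta)$ for the AIPW contribution evaluated at nuisance $\eta=(\mu_a,\pi_a,\rho_a)_{a=0,1}$. For fold $k$ with evaluation index set $I_k$ and nuisance $\widehat\eta_{-k}$ fitted on the complementary folds, we decompose
\[
\frac{1}{n_E}\sum_{i\in I_k}\phi(O_i;\widehat\eta_{-k})-\frac{1}{n_E}\sum_{i\in I_k}\phi(O_i;\eta)
=\frac{1}{n_E}\sum_{i\in I_k}\bigl[\Delta_i^{(k)}-E\{\Delta^{(k)}\mid \widehat\eta_{-k}\}\bigr]+\frac{|I_k|}{n_E}E\{\Delta^{(k)}\mid\widehat\eta_{-k}\},
\]
where $\Delta^{(k)}=\phi(O;\widehat\eta_{-k})-\phi(O;\eta)$. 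Because the number of folds is fixed, it suffices to show that each term is $o_p(n_E^{-1/2})$ for each fold.

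\emph{Empirical-process term.} Conditional on the training fold, the summands are i.i.d. with mean zero. This uses the externally fixed information-set rule, so $H$ is not itself data-adapted within the evaluation fold. By Chebyshev's inequality, the term is $O_p\bigl(n_E^{-1/2}\|\Delta^{(k)}\|_{L_2}\bigr)$. Expanding $\Delta^{(k)}$ arm by arm gives pieces $\widehat\mu_a-\mu_a$ and $\{1(A=a)\Delta/(\widehat\pi_a\widehat\rho_a)\}(\mu_a-\widehat\mu_a)$. Each is controlled by the lower bound on the fitted probabilities. The remaining piece is $\{1(A=a)\Delta\}\{(\widehat\pi_a\widehat\rho_a)^{-1}-(\pi_a\rho_a)^{-1}\}(Y-\mu_a)$. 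Its $L_2$ norm is bounded by a constant times $\|\widehat\pi_a-\pi_a\|+\|\widehat\rho_a-\rho_a\|$, using the bounded conditional residual second moments. The consistency condition then makes $\|\Delta^{(k)}\|_{L_2}=o_p(1)$.

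\emph{Bias term.} Conditioning on $H$ and using $E\{Y\mid A=a,\Delta=1,H\}=\mu_a(H)$ together with $E\{1(A=a)\Delta\mid H\}=\pi_a\rho_a$, the conditional mean of the arm-$a$ part of $\Delta^{(k)}$ is
\[
E\Bigl[\Bigl\{\frac{\pi_a\rho_a}{\widehat\pi_a\widehat\rho_a}-1\Bigr\}\{\mu_a-\widehat\mu_a\}(H)\Bigr].
\]
Next we write $\pi_a\rho_a-\widehat\pi_a\widehat\rho_a=\rho_a(\pi_a-\widehat\pi_a)+\widehat\pi_a(\rho_a-\widehat\rho_a)$. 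With the denominators bounded away from zero, Cauchy--Schwarz bounds this expression by a constant times $\|\widehat\mu_a-\mu_a\|\bigl(\|\widehat\pi_a-\pi_a\|+\|\widehat\rho_a-\rho_a\|\bigr)=o_p(n_E^{-1/2})$. This product-rate step, with the double factor $\pi_a\rho_a$, is the main obstacle. It must be stated carefully: the norm is taken under the eligible-frame law of $H$, and boundedness must hold for the fitted as well as the true probabilities.

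\emph{Two targets and variances.} Summing over folds gives $\widehat\Psi_K=n_E^{-1}\sum_i\phi_i^K+o_p(n_E^{-1/2})$. Subtracting $\Psi_K$ yields the first representation. Subtracting $\Psi_{K,n}=n_E^{-1}\sum_i\tau_K(H_i)$ instead leaves $n_E^{-1}\sum_iR_i^K$, which gives the second. The classical Lindeberg CLT gives the normal limits. For the variance identity, $E(R^K\mid H)=0$ by the identification computation, so $\operatorname{Cov}\{R^K,\tau_K(H)\}=0$. Hence $V_{pop}=E\{(R^K)^2\}+\operatorname{Var}\{\tau_K(H)\}$, which is $V_{emp}+\operatorname{Var}\{\tau_K(H)\}$, and also $E(R^K)=0$. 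For consistency of $\widehat V_{pop}$ and $\widehat V_{emp}$, we bound $n_E^{-1}\sum_i(\widehat\phi_i^K-\phi_i^K)^2$ and $n_E^{-1}\sum_i(\widehat\tau_K-\tau_K)^2(H_i)$ by $o_p(1)$ fold by fold via conditional Markov's inequality, using the same $L_2$ bounds. The law of large numbers for $\phi^K$ and $R^K$ then finishes the argument, with the centering by $\overline{\widehat R}$ being asymptotically negligible.
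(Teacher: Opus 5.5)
Your proposal is correct and follows the paper's proof. Both split each cross-fitted fold into two terms. The first is a centered score-error term, which is $o_p(n_E^{-1/2})$ by conditional Chebyshev once consistency and the residual-moment bound give $L_2$ convergence of the fitted scores. The second is a bias term $E[(q_a/\widehat q_a-1)(\mu_a-\widehat\mu_a)]$ with $q_a=\pi_a\rho_a$, bounded by Cauchy--Schwarz and the product rate. The remaining steps also match the paper: the i.i.d.\ CLT, subtracting $\Psi_{K,n}-\Psi_K$, the variance decomposition via $E(R^K\mid H)=0$, and score convergence plus the law of large numbers for the variance estimators. Your explicit expansion of $\Delta^{(k)}$ and of $\pi_a\rho_a-\widehat\pi_a\widehat\rho_a$ just fills in steps the paper states tersely.
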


\begin{proof}
For each arm, set $q_a=\pi_a\rho_a$ and $\widehat q_a=\widehat\pi_a\widehat\rho_a$. Conditional on the training fold, the bias of its augmented score is
\[
E\left[\left(1-\frac{q_a}{\widehat q_a}\right)
(\widehat\mu_a-\mu_a)\right].
\]
Positivity, Cauchy--Schwarz, and
$\|\widehat q_a-q_a\|\le\|\widehat\pi_a-\pi_a\|+\|\widehat\rho_a-\rho_a\|$
give the stated product bound. Individual consistency and the residual-moment bound imply $L_2$ convergence of the fitted scores. Conditional variance bounds within each held-out fold then make the centered score-estimation error $o_p(n_E^{-1/2})$; a fixed number of folds preserves this order. The i.i.d. central limit theorem gives the population expansion. Subtracting $\Psi_{K,n}-\Psi_K$ gives the residual expansion. Since $E(R_i^K\mid H_i)=0$, the variance decomposition follows. Score convergence and the law of large numbers give consistency of the two empirical variance estimates.
\end{proof}

The product-rate condition alone does not imply the displayed influence function: individual consistency (or a direct fitted-score convergence condition with an appropriate limit) is also needed. With repeated episodes, cross-fitting must split patients and the score must be aggregated using the declared within-patient target weights. A cluster extension requires $L_2$ convergence and the product remainder for those aggregate scores, a cluster central limit theorem, and a consistent variance estimate across independent patients. Arbitrary dependent episodes cannot be treated as independent contributions. Longitudinal treatment and observation mechanisms require their corresponding sequential identification assumptions \citep{robins2000}.

\subsection*{AI extraction regimes}

The theorem conditions on a locked information set $H$. That is appropriate only when the extraction rule is fixed before causal effect estimation. We distinguish three regimes.

\begin{table}[H]
\centering
\caption{AI extraction regimes and their consequences for causal estimation.}
\label{tab:extractor_regimes}
\resizebox{\textwidth}{!}{%
\begin{tabular}{>{\raggedright\arraybackslash}p{0.22\textwidth}>{\raggedright\arraybackslash}p{0.36\textwidth}>{\raggedright\arraybackslash}p{0.34\textwidth}}
\toprule
Regime & Definition & Consequence \\
\midrule
Locked external extractor & The feature extractor, prompt, dictionary, model version, seed rule, and schema are trained or chosen outside the outcome analysis and locked before estimation & Treat $H$ as an observed reconstructed information set, then use the identification and estimation results above \\
Cross fitted study extractor & The extractor is learned inside the study using pre outcome information, with extraction folds separated from causal estimation folds at patient or site level & Include extraction in the cross fitting plan and control representation error relative to a fixed target; Theorem~\ref{thm:orthogonal} alone does not cover arbitrary learned extractors \\
Outcome tuned extractor & Outcome data, treatment effect estimates, interaction tests, or post hoc prompt changes influence the feature definition used in the same confirmatory analysis & Not confirmatory under this framework without sample splitting, selective inference, external validation, or a new lock \\
\bottomrule
\end{tabular}%
}
\end{table}

If the extractor is stochastic, the seed must be fixed, or the seed distribution must be declared before outcome analysis. Repeated extraction can be used to estimate a generation instability component, for example $\operatorname{Var}_{\xi}\{R_\theta(E_i,\xi)\}$. That component is not a causal type by itself; it enters the certificate as measurement instability and may widen the compatible estimand set or the validation-calibrated sensitivity envelope.

Two kinds of uncertainty are kept separate. Value uncertainty means that the feature value changes across seeds, folds, prompts, or extractor versions while its causal role remains the same. This uncertainty may be summarized by repeated extraction and added to the statistical variance or to $B_{\mathrm{extract}}$. Role uncertainty means that the feature could be baseline information, mediator, outcome proxy, observation process, or another role. Role uncertainty is not an ordinary standard error. It changes the estimand and therefore belongs in $\mathfrak P(K)$ through the sets $J_{amb}(K)$ and $\mathcal T_j(K)$. A false admissible rate,
\[
\widehat p_{FA}=\widehat{\Pr}\{T\notin\mathcal A_{pre}\mid\text{point admitted in the validation subset}\},
\]
should be reported when annotation or adjudication data are available. If this rate is not acceptably small for the locked claim, the analysis should not report a single confirmatory point estimator.

For study learned extractors, fold variation is not ignored. Let $\widehat\Psi^{(g)}_K$ denote the causal estimate obtained when fold $g$ is held out for extraction or estimation according to the prespecified cross fitting plan. A descriptive fold-variation width is
\[
B_{\mathrm{fold}}=\frac{1}{2}\left\{\max_g \widehat\Psi^{(g)}_K-\min_g \widehat\Psi^{(g)}_K\right\}.
\]
For stochastic extractors with repeated seeds $b=1,\ldots,B$, an analogous seed width is
\[
B_{\mathrm{seed}}=\frac{1}{2}\left\{\max_b \widehat\Psi^{(b)}_K-\min_b \widehat\Psi^{(b)}_K\right\}.
\]
These quantities are not substitutes for causal identification. They may be declared as sensitivity allowances in the extraction component, but their observed ranges do not by themselves provide calibrated coverage bounds. If outcome data influence prompts, dictionaries, model selection, feature choice, or role labels, the resulting extractor is outcome tuned and the effect estimate is not confirmatory unless an independent split, external validation set, or a justified selection-adjusted procedure is used \citep{lee2016}.

Thus the method proceeds to causal estimation. The certificate determines which information set $H$ is allowed to define the estimand and nuisance functions. Once that is fixed and the identification conditions are credible, standard semiparametric estimators can be used. If the certificate is not strong enough to fix a single role for the reconstructed information, the target becomes a compatible estimand set rather than a point identified effect.

\section{EHR compression drift}

The next result applies the conditional-covariance compression identity to the joint EHR observation mechanism. D'Amour and Franks \citep{damour2021}, Propositions 1--2 and Appendix B, characterize representation-induced bias through conditional covariance; Clivio et al. \citep{clivio2026}, Lemma 3.1, develop a density-ratio version for the average treatment effect on the treated. Here the tilting probability combines EHR frame presence, treatment assignment, and outcome observation. The application connects this established mechanism to the certificate architecture; the covariance identity and its zero-covariance criterion are not claimed as new general compression theory.

Let $W$ denote the prospective information needed for the target trial:
\[
W=(X,D,J,O),
\]
where $X$ is clinical state, $D$ is design and eligibility information, $J$ is clinician judgement and patient preference relevant to treatment, and $O$ is the observation process. Let $Z=T(W)$ be an AI reconstructed information set. Define
\[
s(W)=P(S=1\mid W),
\]
where $S=1$ indicates presence in the EHR analytic frame,
\[
\pi_a(W)=P(A=a\mid S=1,W),
\]
where $\pi_a$ is the treatment assignment or treatment choice mechanism, and
\[
\rho_a(W)=P(\Delta=1\mid S=1,A=a,W),
\]
where $\Delta=1$ indicates outcome observation. Let
\[
m_a(W)=E\{Y(a)\mid W\}.
\]

\begin{theorem}[EHR compression drift]
Assume consistency. Assume $S\perp Y(a)\mid W$, $A\perp Y(a)\mid S=1,W$, and $\Delta\perp Y(a)\mid S=1,A=a,W$. Let $q_a(W)=s(W)\pi_a(W)\rho_a(W)$, and assume $E\{q_a(W)\mid Z=z\}>0$. Then
\[
E(Y\mid S=1,A=a,\Delta=1,Z=z)-E\{Y(a)\mid Z=z\}
=
\frac{\operatorname{Cov}\{m_a(W),q_a(W)\mid Z=z\}}
{E\{q_a(W)\mid Z=z\}}.
\]
\end{theorem}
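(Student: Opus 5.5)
The plan is to express the observed-data conditional mean as a $q_a$-tilted average of $m_a(W)$ given $Z$, and then to recognize the difference from the untilted average as a normalized covariance. Since $Z=T(W)$ is a function of $W$, conditioning on $W$ fixes $Z$. This lets us compute every quantity by first conditioning on $W$ and then averaging over the law of $W$ given $Z=z$.

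\textbf{Step 1: the numerator.} By the tower property,
\[
E\{Y\,1(S=1,A=a,\Delta=1)\mid Z=z\}=E\bigl[E\{Y\,1(S=1,A=a,\Delta=1)\mid W\}\mid Z=z\bigr].
\]
Consistency replaces $Y$ by $Y(a)$ on the event $\{A=a\}$. The inner expectation then factorizes sequentially:
\[
P(S=1\mid W)\,P(A=a\mid S=1,W)\,P(\Delta=1\mid S=1,A=a,W)\,E\{Y(a)\mid S=1,A=a,\Delta=1,W\}.
\]
The three conditional independence assumptions are applied in reverse order. Observation exchangeability removes conditioning on $\Delta=1$, assignment exchangeability removes $A=a$, and frame exchangeability removes $S=1$. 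This gives $E\{Y(a)\mid W\}=m_a(W)$, so the inner expectation equals $q_a(W)m_a(W)$.

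\textbf{Step 2: the denominator.} The same iterated conditioning shows that $P(S=1,A=a,\Delta=1\mid Z=z)=E\{q_a(W)\mid Z=z\}$. This is positive by assumption, so the observed conditional mean is well defined and equals
\[
\frac{E\{q_a(W)m_a(W)\mid Z=z\}}{E\{q_a(W)\mid Z=z\}}.
\]

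\textbf{Step 3: the covariance form.} Since $Z$ is a function of $W$, the tower property gives $E\{Y(a)\mid Z=z\}=E\{m_a(W)\mid Z=z\}$. Subtracting this from the ratio in Step 2 and writing
\[
E(qm\mid z)-E(q\mid z)E(m\mid z)=\operatorname{Cov}(m,q\mid z)
\]
yields the displayed identity.

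\textbf{Main obstacle.} The only delicate step is the sequential removal of conditioning in Step 1. The independences must be used in exactly the nested form stated: $\Delta$ given $(S=1,A=a,W)$ first, then $A$ given $(S=1,W)$, then $S$ given $W$. Each removal must occur only on events of positive probability. Where $q_a(W)=0$ the integrand vanishes, so conditional expectations on null events need no definition. Integrability of $Y(a)$ (implicit in the statement) is assumed so that all expectations exist.
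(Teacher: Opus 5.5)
Your proposal is correct and matches the paper's proof. Both arguments reduce $E(Y\mid S=1,A=a,\Delta=1,W)$ to $m_a(W)$ via consistency and the three nested exchangeability conditions, write the observed stratum mean as the $q_a$-tilted average $E\{m_a(W)q_a(W)\mid Z=z\}/E\{q_a(W)\mid Z=z\}$, and then subtract $E\{m_a(W)\mid Z=z\}$; your explicit numerator and denominator computation just spells out the tilting step the paper states in one line.
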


\begin{proof}
By consistency and the conditional independence assumptions,
\[
E(Y\mid S=1,A=a,\Delta=1,W)=m_a(W).
\]
Within a level $Z=z$, the conditional distribution of $W$ among observed treated outcomes is tilted by
\[
q_a(W)=P(S=1,A=a,\Delta=1\mid W).
\]
Therefore
\[
E(Y\mid S=1,A=a,\Delta=1,Z=z)=\frac{E\{m_a(W)q_a(W)\mid Z=z\}}{E\{q_a(W)\mid Z=z\}}.
\]
Subtracting $E\{m_a(W)\mid Z=z\}=E\{Y(a)\mid Z=z\}$ gives the covariance identity.
\end{proof}

If the target population is restricted to patients already in the EHR frame, the factor $s(W)$ is removed. If all outcomes are observed, $\rho_a(W)$ is removed. In a randomized trial with fixed follow up, only the assignment factor remains, and the identity reduces to the simpler compression drift result for generated covariates.

The theorem gives a precise criterion for information set emulation. A reconstructed information set $Z$ is safe for arm $a$ at $z$ if
\[
\operatorname{Cov}\{m_a(W),s(W)\pi_a(W)\rho_a(W)\mid Z=z\}=0.
\]
Sufficient conditions are
\[
s(W)\pi_a(W)\rho_a(W)=g_a(Z)
\]
or
\[
m_a(W)=h_a(Z).
\]
The first condition preserves the joint selection probability, without requiring each factor to be separately determined by the representation. The second preserves the potential outcome mean and is sufficient on its own. Accurate prediction of the observed outcome alone establishes neither of these exact measurability conditions.

The same identity can be read as a projection statement. Let
\[
m_a^\perp(W)=m_a(W)-E\{m_a(W)\mid Z\},\qquad
q_a^\perp(W)=q_a(W)-E\{q_a(W)\mid Z\}.
\]
Then the numerator of the drift is
\[
\operatorname{Cov}\{m_a(W),q_a(W)\mid Z\}
=E\{m_a^\perp(W)q_a^\perp(W)\mid Z\}.
\]
Thus drift is the conditional inner product between the residual potential outcome mean and the residual joint observation mechanism after projection onto $Z$. Under the theorem's positive-denominator condition, zero arm-specific drift is equivalent to $E\{m_a^\perp(W)q_a^\perp(W)\mid Z\}=0$. Measurability of either $m_a(W)$ or $q_a(W)$ with respect to $Z$ is a sufficient condition, not a necessary one: both residuals can be nonzero and conditionally orthogonal. For example, if $W$ is uniform on $\{-1,0,1\}$, $Z$ is constant, $m_a(W)=W$, and $q_a(W)=0.2+0.2W^2$, both functions are nonconstant but their covariance is zero. Predictive performance alone does not determine this conditional covariance.

\section{Recovery error and drift bounds}

The previous theorem is exact but involves the unobserved prospective information $W$. A useful bound links information recovery to drift. Suppose a validation process, chart review, or prospective sub study gives a reconstruction $\widehat W(Z)$ of $W$ such that
\[
E\{d^2(W,\widehat W(Z))\}\le \eta^2
\]
under the target law. Suppose $m_a$ and $q_a$ are Lipschitz in the same metric:
\[
|m_a(w)-m_a(w')|\le L_{ma}d(w,w'),
\]
\[
|q_a(w)-q_a(w')|\le L_{qa}d(w,w'),
\]
and assume $E\{q_a(W)\mid Z\}\ge c>0$.

\begin{proposition}[Quadratic recovery bound]
Under the conditions above, the integrated absolute compression drift satisfies
\[
E\left[\left|E(Y\mid S=1,A=a,\Delta=1,Z)-E\{Y(a)\mid Z\}\right|\right]
\le
\frac{L_{ma}L_{qa}}{c}\eta^2.
\]
\end{proposition}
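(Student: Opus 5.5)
The plan is to start from the exact identity in the EHR compression drift theorem and bound its numerator pointwise in $Z$. Under the stated assumptions, for almost every $z$,
\[
E(Y\mid S=1,A=a,\Delta=1,Z=z)-E\{Y(a)\mid Z=z\}
=\frac{\operatorname{Cov}\{m_a(W),q_a(W)\mid Z=z\}}{E\{q_a(W)\mid Z=z\}} .
\]
The assumption $E\{q_a(W)\mid Z\}\ge c>0$ makes the denominator at least $c$, and it also supplies the positivity condition that theorem requires. So it suffices to show
\[
\bigl|\operatorname{Cov}\{m_a(W),q_a(W)\mid Z\}\bigr|\le L_{ma}L_{qa}\,E\{d^2(W,\widehat W(Z))\mid Z\}.
\]
Given this, dividing by $c$, taking expectations and using the tower property with $E\{d^2(W,\widehat W(Z))\}\le\eta^2$ gives the displayed bound.

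For the numerator I will first apply conditional Cauchy--Schwarz,
\[
|\operatorname{Cov}(m_a,q_a\mid Z)|\le \operatorname{Var}(m_a(W)\mid Z)^{1/2}\operatorname{Var}(q_a(W)\mid Z)^{1/2}.
\]
Each conditional variance is then bounded using the reconstruction. The conditional mean minimizes conditional mean squared deviation over $Z$-measurable predictors, and $m_a(\widehat W(Z))$ is $Z$-measurable because $\widehat W$ depends on $Z$ only. Hence
\[
\operatorname{Var}(m_a(W)\mid Z)\le E[\{m_a(W)-m_a(\widehat W(Z))\}^2\mid Z]\le L_{ma}^2E\{d^2(W,\widehat W(Z))\mid Z\}
\]
by the Lipschitz condition. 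The same argument gives $\operatorname{Var}(q_a(W)\mid Z)\le L_{qa}^2E\{d^2(W,\widehat W(Z))\mid Z\}$. Multiplying the two square roots yields the claimed pointwise bound with the same conditional quantity $E\{d^2\mid Z\}$ appearing once, not squared. This is what produces the quadratic rate $\eta^2$ rather than $\eta$.

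The step I expect to need care is the numerator bound. A naive expansion of $E[(m_a(W)-m_a(\widehat W))(q_a(W)-q_a(\widehat W))\mid Z]$ leaves a cross term $(E[m_a\mid Z]-m_a(\widehat W))(E[q_a\mid Z]-q_a(\widehat W))$ of uncontrolled sign. Routing the argument through Cauchy--Schwarz on the covariance and the variance-minimizing property of the conditional mean avoids that term. The remaining points are technical: measurability of $z\mapsto m_a(\widehat W(z))$ and $z\mapsto q_a(\widehat W(z))$, and finiteness of the conditional second moments. The first holds when $\widehat W$ is a measurable map and $m_a$, $q_a$ are Lipschitz, hence continuous. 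The second follows from $q_a\in[0,1]$ and from the Lipschitz bound combined with $E d^2<\infty$.
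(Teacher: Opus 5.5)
Your proposal is correct and follows the paper's proof essentially step for step: the drift identity with denominator bounded below by $c$, conditional Cauchy--Schwarz on the covariance, bounding each conditional variance by the conditional mean squared error of the $Z$-measurable predictor $m_a(\widehat W(Z))$ (resp.\ $q_a(\widehat W(Z))$) together with the Lipschitz condition, and finally the tower property with $E\{d^2(W,\widehat W(Z))\}\le\eta^2$. Your added remarks on measurability and on finiteness of the conditional second moments are harmless refinements that the paper leaves implicit.
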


\begin{proof}
From the compression drift identity and Cauchy Schwarz,
\[
\left|\frac{\operatorname{Cov}(m_a(W),q_a(W)\mid Z)}{E(q_a(W)\mid Z)}\right|
\le
c^{-1}\{\operatorname{Var}(m_a(W)\mid Z)\operatorname{Var}(q_a(W)\mid Z)\}^{1/2}.
\]
Because $\widehat W(Z)$ is measurable with respect to $Z$,
\[
\operatorname{Var}(m_a(W)\mid Z)\le E\left[\{m_a(W)-m_a(\widehat W(Z))\}^2\mid Z\right]
\le L_{ma}^2 E\{d^2(W,\widehat W(Z))\mid Z\}.
\]
The same bound holds for $q_a$ with constant $L_{qa}$. Taking expectations gives the result.
\end{proof}

The bound gives a sufficient route to small drift when reconstruction controls residual variation in both the potential outcome mean and the joint observation mechanism. Small residuals in both components are not necessary: exact measurability of either function, or conditional covariance cancellation, can also give zero drift.

\section{Post treatment text and source separation}

EHR notes create a distinctive timing problem. A note recorded after treatment may contain pre treatment history and post treatment response in the same document. Excluding all post index notes may throw away useful baseline information. Using the full note may create leakage. The distinction between clinical time and availability time is critical.

This source-separation step is closely related to text distillation in \citet{daoud2022}, which removes treatment-generated textual components while retaining information useful for confounding adjustment. Separating treatment-related text is therefore an established idea. The additional requirement here is to record clinical time and decision-time availability separately and to connect the resulting evidence or unresolved ambiguity to a locked target and its compatible reporting set. Successful distillation alone does not establish that the reconstructed content was available to the original treatment decision.

Let a note be decomposed as
\[
N=N^-\oplus N^+,
\]
where $N^-$ refers to pre treatment clinical content and $N^+$ refers to post treatment course, response, complications, or outcome adjacent documentation. A feature extracted from the whole note is
\[
Z=R_\theta(N^-,N^+).
\]
A source separated feature is
\[
Z^-=R^-_\theta(N^-).
\]

\begin{proposition}[Dual-clock source separation]
Under the primary point-estimation certificate, a feature generated from a post index note is admissible as baseline clinical information only if its output is invariant to the post treatment component relevant to the locked contrast,
\[
R_\theta(N^-,N^+(1))=R_\theta(N^-,N^+(0)),
\]
or if the representation is source separated so that
\[
R_\theta(N^-,N^+)=R^-_\theta(N^-).
\]
It is admissible as treatment decision or design information only under the stronger condition that the certified content was available at or before the decision time:
\[
t^{avail}(Z^-)\le T_i.
\]
If source separation fails, conditioning on the feature compares treatment specific strata. If source separation holds but availability fails, the feature may reconstruct baseline clinical state but not the information set that guided treatment choice.
\end{proposition}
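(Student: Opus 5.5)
The proposition is a pair of ``only if'' statements, so I would argue each by contraposition against the earlier definitions. The baseline role requires pre-treatment measurability: $H(1)=H(0)$, as in the identification theorem. The design role requires the availability predicate $\mathsf{Avail}_t$ of the causal certificate. The plan is to show that a violation of each displayed condition makes the feature fail the corresponding predicate, and then to attach the two consequences stated at the end.

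First, the baseline part. Write the potential feature as $Z(a)=R_\theta(N^-,N^+(a))$, where $N^-$ is fixed before treatment by definition. Certified pre-treatment measurability, as used in the identification theorem, requires $Z(1)=Z(0)$. The invariance display is exactly this requirement, so the condition is necessary. It is also sufficient for that requirement. The source-separated form $R_\theta=R^-_\theta(N^-)$ is a sufficient special case, because $N^-$ does not depend on $a$. If neither holds, then for some patients $Z(1)\neq Z(0)$. Conditioning on $Z=z$ then fixes $Z(1)=z$ among treated patients and $Z(0)=z$ among controls, which are different strata of the underlying population. In that case $E(Y\mid A=1,Z=z)-E(Y\mid A=0,Z=z)$ compares treatment-specific strata. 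Exchangeability given $H$ cannot rescue this, because $H$ is no longer a fixed baseline variable. I would make this concrete by invoking Corollary~\ref{cor:flat_nonid}: the world in which $Z$ carries $N^+$ is a mediator or outcome-proxy world, so $\mathcal T(K)\cap\mathcal A_{down}\neq\emptyset$. The downstream domination rule then removes the feature from $J_{point}$.

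Second, the design part. A decision role $D$ (or $H$ used for decision reconstruction) passes $\operatorname{Cert}_D$ only if $\mathsf{Avail}_D=1$, which by the certificate rubric means $t^{avail}\le T_i$. This gives necessity directly. To show the condition is genuinely stronger than baseline admissibility, I would exhibit a world pair inside one fiber: $Z^-$ is source-separated and $t^{clin}<T_i$, but in one world $Z^-$ entered the treatment mechanism $\pi_a$ and in the other it became known only after $T_i$. In the second world $\pi_a$ cannot depend on $Z^-$ as decision information. Thus $Z^-$ can reconstruct the clinical state component $X$ of $W$ but not the judgement/design component $(D,J)$. The compression drift theorem then shows that using $Z^-$ as a stand-in for the decision information set does not make $q_a$ $Z$-measurable.

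The main obstacle is that the statement is partly definitional. I must fix precisely what ``admissible under the primary point-estimation certificate'' means, and I would take it to be membership in $J_{point}$ together with pre-treatment measurability of $H$. Otherwise the necessity claims are not formal. With that choice, the only substantive step is the treatment-specific-strata argument, which I would present as the standard collider or post-treatment conditioning bias computation. The availability clause follows immediately from the predicate definitions.
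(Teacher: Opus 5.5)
Your argument is essentially the paper's proof. For the baseline clause, the paper also writes $Z(a)=R_\theta(N^-,N^+(a))$ and uses consistency to note that conditioning on $Z=z$ selects $Z(1)=z$ among the treated but $Z(0)=z$ among controls; for the decision clause, it likewise reads necessity straight off the definition of availability. The extras should not carry weight: Corollary~\ref{cor:flat_nonid} concerns $K_0$ and does not force $\mathcal T(K)\cap\mathcal A_{down}\neq\emptyset$ for a given certificate, so necessity should rest on $H(1)=H(0)$ rather than on $J_{point}$ membership, and the compression-drift identity shows neither that $q_a$ fails to be $Z$-measurable nor that such a failure produces drift.
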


\begin{proof}
If the output depends on $N^+(a)$, the feature has potential values $Z(a)$. By consistency, among those with $A=a$ and observed $Z=z$, the conditioning event is $Z(a)=z$. Among those with $A=1-a$, the conditioning event is $Z(1-a)=z$. Unless the feature is invariant or source separated, these are not a common baseline stratum. The final claim follows from the definition of availability: a fact that existed before treatment but was not available to the decision process cannot be used to reconstruct a decision information set, even if it can be used as baseline clinical history.
\end{proof}

This result is constructive. It does not say that post index notes are unusable. It says that using them for baseline reconstruction requires source separation, and using them for decision reconstruction requires source separation plus availability evidence.

In implementation, source separation should be tested rather than assumed. One architecture neutral procedure is masking. The extractor is applied to the full document, then to the same document with post treatment sentences masked. If the proposed baseline feature changes materially, the certificate fails source separation. A second procedure is span restricted extraction: the model is required to return the source span used for the feature and the span must be classified as pre treatment clinical content by the clock schema in Table~\ref{tab:extraction_schema}. A third procedure is adversarial prompting or review: reviewers ask whether the feature could be inferred from treatment response, discharge status, or endpoint language. These procedures do not prove invariance for all possible notes, but they provide an auditable basis for $B_{\mathrm{sep}}(K)$ and for excluding features that cannot be separated.

\begin{table}[H]
\centering
\caption{Architecture neutral source separation checks for post index text. A feature must pass the relevant checks before it can contribute to the point identified pre treatment information set.}
\label{tab:source_separation_checks}
\scriptsize
\resizebox{\textwidth}{!}{%
\begin{tabular}{>{\raggedright\arraybackslash}p{0.20\textwidth}>{\raggedright\arraybackslash}p{0.33\textwidth}>{\raggedright\arraybackslash}p{0.22\textwidth}>{\raggedright\arraybackslash}p{0.20\textwidth}}
\toprule
Check & Procedure & Passing rule & Failure consequence \\
\midrule
Span restriction & Require the extractor to return the exact sentence or event span used for the feature & The span is classified as pre treatment clinical content under the clock schema & Route to mixed role or exclude from point estimator \\
Post treatment masking & Re run the extractor after masking post treatment sentences, treatment response, discharge state, and endpoint language & Feature value and proposed role remain stable within a prespecified tolerance & Add to $B_{\mathrm{sep}}$ or move to $J_{amb}(K)$ \\
Availability adjudication & Ask whether the source supports availability at the treatment decision time, not merely pre treatment existence & $t^{avail}\le T_i$ is supported by source, protocol, or workflow evidence & May be baseline clinical state but not design or decision information \\
Outcome adjacency check & Screen source spans for endpoint definitions, improvement language, mortality, discharge status, or follow up summary & Outcome relation is independent or unresolved but bounded & Treat as outcome proxy risk and exclude from point estimator \\
Repeated extraction and role review & Repeat extraction across seeds, folds, prompts, or reviewers & Role disagreement and feature instability are below prespecified thresholds & Expand compatible role set or sensitivity envelope \\
\bottomrule
\end{tabular}%
}
\end{table}

\subsection*{Synthetic certificate vignette}

The present paper does not use real patient data. To show the intended workflow, Table~\ref{tab:synthetic_vignette} gives a synthetic note example. The point is not the clinical content; it is the certificate logic.

\begin{table}[H]
\centering
\caption{Synthetic note vignette illustrating certificate based routing. The text is artificial and contains no patient data.}
\label{tab:synthetic_vignette}
\scriptsize
\resizebox{\textwidth}{!}{%
\begin{tabular}{>{\raggedright\arraybackslash}p{0.26\textwidth}>{\raggedright\arraybackslash}p{0.18\textwidth}>{\raggedright\arraybackslash}p{0.18\textwidth}>{\raggedright\arraybackslash}p{0.16\textwidth}>{\raggedright\arraybackslash}p{0.16\textwidth}}
\toprule
Synthetic source text & Candidate feature & Certificate finding & Primary role & Consequence \\
\midrule
Pre index clinic note: patient walks one block, needs help with bathing, ECOG not recorded & AI performance status & Clinical and recording times before $T_i$; available before decision & $B$ or $D$ depending on target lock & May enter certified information set \\
Post index progress note: patient looked frail before treatment and improved after therapy & AI frailty summary & Mixed $N^-$ and $N^+$; source separation required & unresolved $B/M/Y$ & Exclude from point identified primary estimator unless separated \\
Discharge summary: poor prognosis discussed, died during admission & AI prognosis score & Outcome adjacent source; endpoint information present & $Y$ & Exclude from baseline adjustment \\
Order history: extra labs ordered every 6 hours after treatment & Monitoring intensity & Observation mechanism after treatment & $\mathsf{Obs}$ & Use in observation model or sensitivity analysis, not as baseline state \\
\bottomrule
\end{tabular}%
}
\end{table}
\section{Certificate conditional approximation frontier}

The value of information set emulation should not be described only as credibility. It should be evaluated by error. Fix a declared world class $\Theta$, typed evidence $K$, and an observed law $p$ with nonempty compact compatible image $\mathfrak P_K(p)$. Let $r_K(p)$ be its information radius and let $c_K(p)$ be a Chebyshev center.

Let $\widehat c_K$ estimate $c_K(p)$ from $n_E$ observations generated under the common observed law $p$, and suppose
\[
b_n=E_p\{\widehat c_K-c_K(p)\},\qquad \operatorname{Var}_p(\widehat c_K)=\frac{\sigma_K^2}{n_E}+o(n_E^{-1}).
\]

\begin{theorem}[Certificate conditional approximation frontier]
Under the fixed tuple $(\Theta,K,p)$ above,
\[
\sup_{\theta\in\mathcal F_K(p)}E_\theta\{(\widehat c_K-\Psi(\theta))^2\}
\le
r_K^2(p)+2r_K(p)|b_n|+b_n^2+\frac{\sigma_K^2}{n_E}+o(n_E^{-1}).
\]
Consequently, by the triangle inequality in $L_2$,
\[
\sup_{\theta\in\mathcal F_K(p)}\{E_\theta(\widehat c_K-\Psi(\theta))^2\}^{1/2}
\le
r_K(p)+\left(b_n^2+\frac{\sigma_K^2}{n_E}\right)^{1/2}+o(n_E^{-1/2}).
\]
Let a prospective study estimator have mean squared error approximately $\sigma_P^2/n_P$. Define the nonnegative tolerance $\tau_{\mathrm{MSE}}$ on the mean-squared-error scale. The AI augmented EHR study satisfies the prospective-study approximation criterion at that tolerance if
\[
\left\{r_K(p)+\left(b_n^2+\frac{\sigma_K^2}{n_E}\right)^{1/2}\right\}^2
\le
\frac{\sigma_P^2}{n_P}+\tau_{\mathrm{MSE}}.
\]
\end{theorem}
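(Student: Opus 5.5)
The plan is to use the same device as in the exact fiber radius theorem: every world in $\mathcal F_K(p)$ induces the common observed law $p$, so the distribution of $\widehat c_K$ does not depend on which $\theta\in\mathcal F_K(p)$ is true. Fix $\theta$ in the fiber and write $\psi=\Psi(\theta)\in\mathfrak P_K(p)$. Then $E_\theta\{\widehat c_K\}=c_K(p)+b_n$ and $\operatorname{Var}_\theta(\widehat c_K)=\operatorname{Var}_p(\widehat c_K)$. The bias--variance decomposition gives
\[
E_\theta\{(\widehat c_K-\psi)^2\}=\operatorname{Var}_p(\widehat c_K)+\{c_K(p)+b_n-\psi\}^2 .
\]

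The next step bounds the squared bias term uniformly over the fiber. Since $c_K(p)$ is a Chebyshev center of the compact image, $|c_K(p)-\psi|\le r_K(p)$ for every $\psi\in\mathfrak P_K(p)$. Hence $|c_K(p)+b_n-\psi|\le r_K(p)+|b_n|$, and squaring gives $r_K^2+2r_K|b_n|+b_n^2$. Adding $\sigma_K^2/n_E+o(n_E^{-1})$ and taking the supremum over $\theta$ yields the first display; the remainder does not depend on $\theta$, so the supremum passes through it.

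For the root-MSE form, I would write $\widehat c_K-\psi=(c_K-\psi)+(\widehat c_K-c_K)$. The first summand is deterministic with absolute value at most $r_K$. The second has $L_2$ norm $\{b_n^2+\sigma_K^2/n_E+o(n_E^{-1})\}^{1/2}$. Minkowski's inequality then gives the bound. To convert the $o(n_E^{-1})$ inside the square root into an additive $o(n_E^{-1/2})$, I would use $\sqrt{x+y}\le\sqrt x+\sqrt{|y|}$ together with $\sqrt{o(n_E^{-1})}=o(n_E^{-1/2})$.

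The final claim is definitional rather than a derived result. It states that when the squared root-MSE bound falls below $\sigma_P^2/n_P+\tau_{\mathrm{MSE}}$, the worst-case fiber MSE, up to the vanishing remainder, is no worse than the prospective MSE plus tolerance. This follows by squaring the second display, since both sides are nonnegative.

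The main obstacle is bookkeeping rather than substance. One must make sure the remainder is uniform over the fiber, which holds because the law is common to all worlds in $\mathcal F_K(p)$. One must also note that existence of $c_K(p)$ relies on the assumed nonempty compact image.
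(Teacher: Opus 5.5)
Your proposal is correct and follows essentially the same route as the paper. The common observed law across the fiber makes the distribution of $\widehat c_K$ free of $\theta$, and the Chebyshev bound $|c_K(p)-\psi|\le r_K(p)$ controls both the cross term and the squared-distance term; your bias--variance split around $E_p\widehat c_K$ is the same algebra as the paper's expansion around $c_K(p)$. The RMSE form and the criterion then follow from Minkowski's inequality and squaring, and your explicit treatment of the $o(n_E^{-1})$ remainder under the square root via $\sqrt{x+y}\le\sqrt{x}+\sqrt{|y|}$ is a small step the paper leaves implicit.
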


\begin{proof}
For any compatible $\psi$,
\[
\widehat c_K-\psi=\{\widehat c_K-c_K(p)\}+\{c_K(p)-\psi\}.
\]
Let $d_\psi=c_K(p)-\psi$, so $|d_\psi|\le r_K(p)$. Then
\[
E_p\{(\widehat c_K-\psi)^2\}
=E_p\{(\widehat c_K-c_K(p))^2\}+2d_\psi E_p\{\widehat c_K-c_K(p)\}+d_\psi^2.
\]
Because every $\theta\in\mathcal F_K(p)$ induces the same observed-data law $p$, the distribution of $\widehat c_K$ is common across the fiber. Using $E_p\{(\widehat c_K-c_K(p))^2\}=b_n^2+\sigma_K^2/n_E+o(n_E^{-1})$ and taking the supremum over the compatible image gives the first bound. The root mean squared error inequality is a weaker consequence obtained by the triangle inequality in $L_2$; it is not equivalent to the first display. Squaring that RMSE bound and comparing it with the prospective mean squared error gives the stated criterion on the MSE scale.
\end{proof}

This theorem is not a claim that retrospective EHR studies equal prospective studies. It states a decision criterion. A large EHR study can approach the mean squared error of a smaller prospective study only if the information radius and estimation bias are small enough. When the certificate leaves substantial role ambiguity or unmeasured decision information, $r(K)$ remains large and the study is not a prospective study approximation.

\section{Simulation study}

The simulation is artificial by design. It is not intended to show the effect of a clinical treatment and is not external validation of EHR feasibility. It is a controlled failure mode experiment for the mathematical claims above in a setting where the prospective information set, causal type, potential outcomes, EHR presence, treatment choice, and outcome observation mechanisms are all known. Each method produces a causal effect estimate, an estimated standard error, and a confidence interval. The comparison is therefore not whether AI can prepare better features, but whether the information set created by each method supports the intended causal effect estimator. The main simulations use standardized working models to isolate information-structure failures. They are not intended as a numerical study of the semiparametric efficiency result in Theorem~\ref{thm:orthogonal}. The AIPW route is one valid estimation path once a point information set has been certified; the simulations ask a different question, namely whether the information structure supplied to the estimator preserves the locked target.

The design follows the ADEMP logic for simulation studies: aims, data generating mechanisms, estimands, methods, and performance measures are specified before results are read. We used 1000 Monte Carlo replications with $n=1000$ generated patients per replication. The Monte Carlo standard error of a nominal 0.95 coverage estimate is approximately $\{0.95(0.05)/1000\}^{1/2}=0.0069$.

\subsection{Data generating mechanism}

For each patient we generated a prospective information set
\[
W_i=(X_i,D_i,J_i,O_i).
\]
Here $X_i$ is clinical state, $D_i$ is a design, eligibility, or treatment availability indicator, $J_i$ is latent clinical judgement such as performance status, and $O_i$ is visit or measurement intensity. The structured EHR variables recorded $X_i$, $D_i$, and $O_i$, but not $J_i$. The AI reconstruction was generated as a noisy pre treatment proxy for $J_i$. A separate post index feature encoded treatment response and outcome adjacent documentation.

Presence in the EHR analytic frame, treatment choice, and outcome observation were generated by
\[
\begin{aligned}
s(W_i)&=P(S_i=1\mid W_i),\\
\pi_1(W_i)&=P(A_i=1\mid S_i=1,W_i),\\
\rho_a(W_i)&=P(\Delta_i=1\mid S_i=1,A_i=a,W_i).
\end{aligned}
\]
The potential outcome model was
\[
Y_i(a)=0.30+0.75X_i+0.80D_i+1.00J_i+0.65O_i+a(0.60+0.30J_i)+\varepsilon_i.
\]
Thus the empirical prospective target among patients in the EHR frame is
\[
\Psi_P=n_S^{-1}\sum_{i:S_i=1}\{0.60+0.30J_i\}.
\]
The fitted models used outcome data only when $S_i=\Delta_i=1$ but standardized predictions over the EHR frame $S_i=1$.

In this main design the AI reconstruction is generated with low extraction noise and correctly certified roles. It is intentionally a favorable setting; the stress tests in Section~11.5 relax exactly these conditions. We compared six causal estimators. The Structured EHR estimator adjusted only for recorded structured variables and therefore omitted the latent clinical judgement $J_i$. The Naive AI estimator placed all AI features into the adjustment set, including post index and design erasing signals. The ISE causal estimator used the certified pre treatment AI reconstruction while retaining design and observation relevant structured information. The Design erasure estimator used an AI score that compressed clinical judgement while dropping design and observation information. The Post index leakage estimator used a post treatment note feature as if it were baseline information. The Oracle prospective estimator used the true $W_i$ and is included only as a benchmark for the prospective information set.

\begin{table}[H]
\centering
\caption{Main artificial EHR simulation. Results use 1000 Monte Carlo replications with $n=1000$ generated patients per replication. The target is the empirical prospective information set effect among patients in the EHR frame.}
\label{tab:main_sim}
\resizebox{\textwidth}{!}{%
\begin{tabular}{lrrrrrr}
\toprule
Method & True & Estimate & Bias & RMSE & Mean SE & Coverage \\
\midrule
Structured EHR estimator & 0.644 & 1.391 & 0.747 & 0.761 & 0.140 & 0.001 \\
Naive AI estimator & 0.644 & 0.140 & -0.503 & 0.513 & 0.095 & 0.000 \\
ISE causal estimator & 0.644 & 0.657 & 0.013 & 0.115 & 0.111 & 0.938 \\
Design erasure estimator & 0.644 & 1.026 & 0.383 & 0.404 & 0.126 & 0.146 \\
Post index leakage estimator & 0.644 & 0.164 & -0.479 & 0.491 & 0.100 & 0.000 \\
Oracle prospective estimator & 0.644 & 0.645 & 0.002 & 0.114 & 0.110 & 0.941 \\
\bottomrule
\end{tabular}%
}
\end{table}

The main pattern is the intended one. Structured EHR estimation retained large residual bias because the clinical judgement variable was not observed. The ISE causal estimator recovered nearly the oracle performance, with small bias and approximately nominal coverage. Naive AI estimation, design erasure, and post index leakage did not merely add noise; they changed the causal information set and moved the estimator toward the wrong target. The post index leakage estimator was precise but wrong, which is the practical danger of AI features derived from outcome adjacent text.

\subsection{Flat table nonidentification and certificate shrinkage}

Corollary~\ref{cor:flat_nonid} shows that the causal target need not be identified from a flat analysis table when its compatible worlds imply different target values. Table~\ref{tab:flat_nonid} gives a binary artificial example. The same observed distribution of $(A,Z,Y)$ is compatible with $Z$ as a baseline state, a post treatment mediator, or an outcome-adjacent proxy, but the corresponding estimands differ.

\begin{table}[H]
\centering
\caption{Flat table nonidentification demonstration. The same observed distribution of $(A,Z,Y)$ admits incompatible causal roles for $Z$ and incompatible estimands.}
\label{tab:flat_nonid}
\resizebox{\textwidth}{!}{%
\begin{tabular}{llr}
\toprule
Interpretation & Estimand & Value \\
\midrule
Z as baseline state & Standardized contrast over $P(Z)$ & 0.250 \\
Z as post treatment mediator & Crude total contrast under exogenous treatment & 0.460 \\
Z as outcome adjacent proxy & Sensitivity-completed treatment contrast & -0.170 \\
Difference from same flat table & Mediator interpretation minus baseline interpretation & 0.210 \\
\bottomrule
\end{tabular}%
}
\end{table}

Certification narrows the compatible estimand set by excluding causal worlds that conflict with the source and timing evidence. Table~\ref{tab:certificate_shrinkage} shows the induced reduction in the information radius for the same flat table demonstration.

\begin{table}[H]
\centering
\caption{Certificate induced shrinkage of compatible estimands in the flat table demonstration. The squared radius is the exact minimax information risk for any estimator that cannot distinguish the remaining compatible worlds.}
\label{tab:certificate_shrinkage}
\resizebox{\textwidth}{!}{%
\begin{tabular}{lrrrrr}
\toprule
Certificate & Number & Lower & Upper & Radius & Information risk $r^2$ \\
\midrule
No temporal certificate $K_0$ & 3 & -0.170 & 0.460 & 0.315 & 0.0992 \\
AI typed lift $K_\theta$ & 2 & 0.250 & 0.460 & 0.105 & 0.0110 \\
Pre treatment availability certificate $K_\theta\vee E$ & 1 & 0.250 & 0.250 & 0.000 & 0.0000 \\
Post index source certificate $K_{post}$ & 2 & -0.170 & 0.460 & 0.315 & 0.0992 \\
\bottomrule
\end{tabular}
}
\end{table}

The three compatible worlds in the untyped row are the baseline, mediator, and outcome-adjacent constructions of Table~\ref{tab:flat_nonid}. The typed lift excludes the outcome-adjacent world and shrinks the radius from $0.315$ to $0.105$; the source and availability audit isolates the pre-treatment world and reduces the radius to zero. The post-index certificate excludes the baseline world but retains the two extremal targets, so its radius does not shrink. This equality case is important: stronger evidence can remove worlds without reducing the Chebyshev radius when the removed worlds are not extremal. The construction and all reported values are generated by the exact compatible-world script used again in Section~\ref{sec:finite_frontier}.

\subsection{Compression drift identity}

We verify the identity by exact enumeration on a finite artificial population. Independently, $X,J,O$ are uniform on $\{-1,1\}$ and $D$ is Bernoulli with probability $1/2$, giving 16 equally likely states. The functions $s(W)$, $\pi_a(W)$, $\rho_a(W)$, and $m_a(W)$ use the coefficients in Appendix~A. This finite-support validation is separate from the continuous simulation in Table~\ref{tab:main_sim}. The compressed feature is $Z=X+J$; the richer reconstruction is $(Z,D,O)$.

For each stratum and arm, the direct drift is
\[
d_a(z)=
\frac{\sum_w p(w\mid z)q_a(w)m_a(w)}{\sum_w p(w\mid z)q_a(w)}
-\sum_w p(w\mid z)m_a(w).
\]
Table~\ref{tab:compression_drift} reports the population-stratum-probability weighted absolute drift and the same summary computed from the conditional covariance formula. The final column is the maximum absolute difference over all strata. The compressed and richer reconstructions have 3 and 12 strata, respectively; none is omitted. All conditional quantities are enumerated, so differences reflect only floating-point arithmetic. The script \texttt{compressionDriftValidation.py} writes the support, all cell calculations, summary, and table.

\begin{table}[H]
\centering
\caption{Exact finite-support validation of the EHR compression-drift identity. Absolute drifts are averaged using population stratum probabilities. There is no Monte Carlo or binning error.}
\label{tab:compression_drift}
\resizebox{\textwidth}{!}{%
\begin{tabular}{lrrrr}
\toprule
Feature & Arm & Direct abs drift & Formula abs drift & Max error \\
\midrule
Compressed score & 0 & 0.202502 & 0.202502 & $<10^{-12}$ \\
Compressed score & 1 & 0.421241 & 0.421241 & $<10^{-12}$ \\
Richer reconstruction strata & 0 & 0.036291 & 0.036291 & $<10^{-12}$ \\
Richer reconstruction strata & 1 & 0.040359 & 0.040359 & $<10^{-12}$ \\
\bottomrule
\end{tabular}
}
\end{table}

\subsection{Prospective approximation frontier}\label{sec:finite_frontier}

Finally, we calibrated the certificate conditional approximation frontier on an explicit finite fiber. This calibration is separate from the estimator-performance comparison in Table~\ref{tab:main_sim}. It fixes one strictly positive observed law $p(A,Z,Y)$:
\[
P(Z=1)=0.5,\qquad P(A=1\mid Z=0)=0.2,\qquad P(A=1\mid Z=1)=0.8,
\]
and
\[
\begin{array}{c|cc}
 & Z=0 & Z=1\\ \hline
P(Y=1\mid A=0,Z) & 0.10 & 0.40\\
P(Y=1\mid A=1,Z) & 0.30 & 0.70
\end{array}.
\]
Three causal worlds induce exactly this same observed law. In world $B$, $Z$ is a pre-treatment state and standardization over $P(Z)$ gives $\Psi_B=0.250$. In world $M$, treatment is exogenous and $Z$ is post-treatment, so the total contrast is the crude contrast $\Psi_M=0.460$. In world $Y$, $Z$ is outcome-adjacent and treatment selection is latent; the observed potential-outcome means are completed by the declared sensitivity coordinates $E\{Y(1)\mid A=0\}=0.10$ and $E\{Y(0)\mid A=1\}=0.90$, giving $\Psi_Y=-0.170$. The latter construction satisfies consistency and reproduces the same observed $(A,Y)$ law; drawing $Z$ from the common $P(Z\mid A,Y)$ reproduces the full observed law. Thus these are not estimator biases used as substitutes for a radius: they are estimand values of explicitly compatible worlds in the same fiber.

The flat-table certificate $K_0$ retains $\{B,M,Y\}$; the typed lift $K_\theta$ excludes the outcome-adjacent world but retains $\{B,M\}$; the source and availability audit $K_\theta\vee E$ certifies $B$; and a post-index source certificate $K_{post}$ retains $\{M,Y\}$. The exact compatible image, Chebyshev center, and radius are computed directly for each state. In 5000 Monte Carlo samples of size $n_E=1000$, plug-in estimates of the compatible-world functionals were used only to estimate the sampling RMSE of the Chebyshev center. The observed worst-world RMSE is $\sup_{\theta\in\mathcal F_K(p)}\{E_\theta(\widehat c_K-\Psi(\theta))^2\}^{1/2}$. The prospective reference estimates $\Psi_B$ from independent samples of size $n_P=300$. Table~\ref{tab:frontier} reports all quantities on the RMSE scale; the tolerance is zero.

\begin{table}[H]
\centering
\caption{Exact compatible-world certificate frontier. The radius is the Chebyshev radius of an explicitly enumerated compatible estimand image under a common observed law. Center RMSE is Monte Carlo sampling error; worst-world RMSE is evaluated over the retained fiber. The prospective reference RMSE is $0.068$ and the tolerance is zero.}
\label{tab:frontier}
\scriptsize
\resizebox{\textwidth}{!}{%
\begin{tabular}{llrrrrr}
\toprule
Evidence & Retained worlds & $r_K(p)$ & Center RMSE & RMSE bound & Worst-world RMSE & Meets \\
\midrule
Flat $K_0$ & B, M, Y & 0.315 & 0.020 & 0.335 & 0.316 & No \\
Typed $K_\theta$ & B, M & 0.105 & 0.030 & 0.135 & 0.110 & No \\
Typed + audit $K_\theta\!\vee\!E$ & B & 0.000 & 0.036 & 0.036 & 0.036 & Yes \\
Post-index $K_{post}$ & M, Y & 0.315 & 0.020 & 0.335 & 0.316 & No \\
\bottomrule
\end{tabular}%
}
\end{table}

\begin{figure}[H]
\centering
\includegraphics[width=0.95\textwidth]{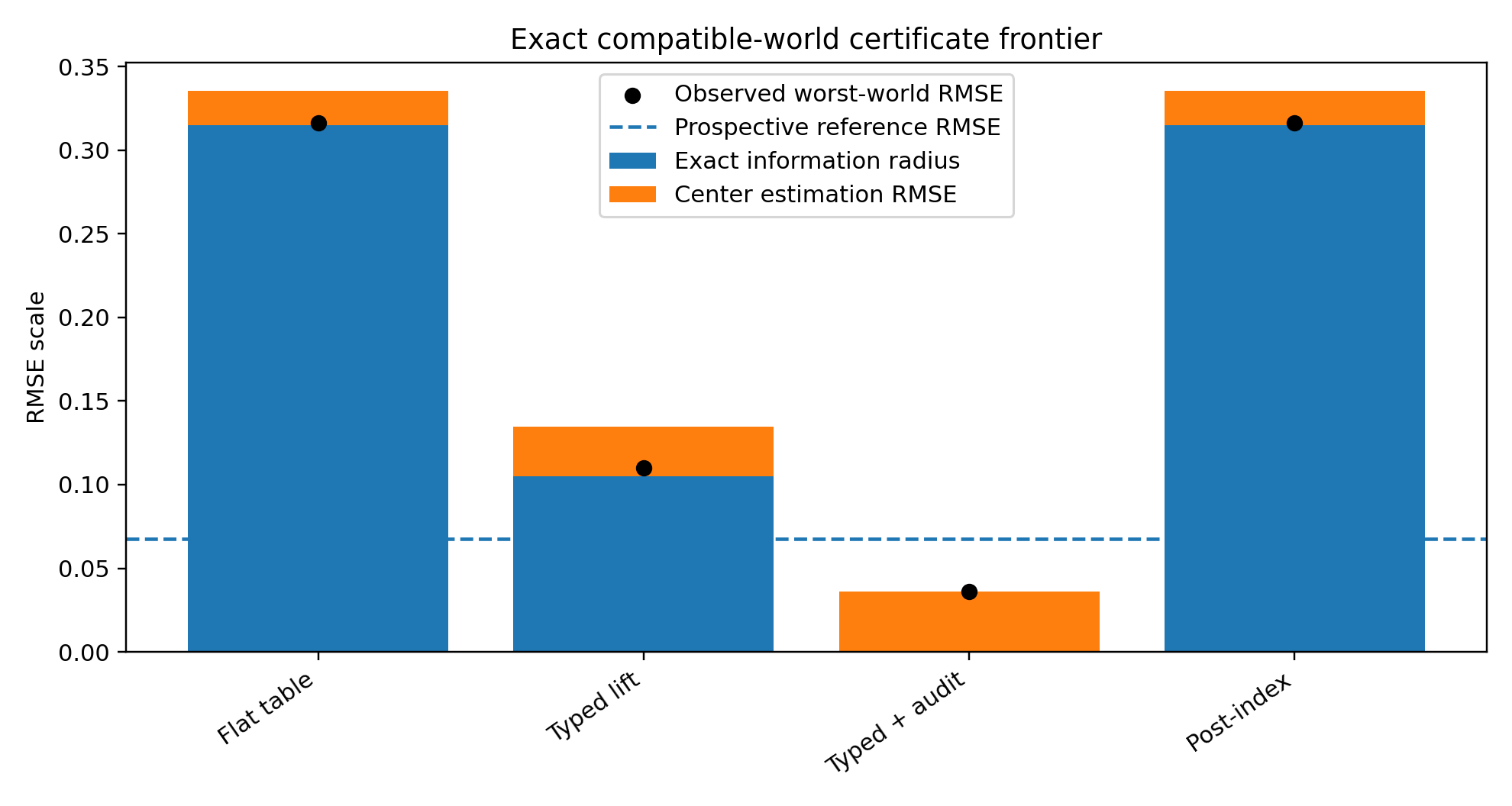}
\caption{Exact compatible-world certificate frontier. Bars stack the exact information radius and the Monte Carlo RMSE for estimating its Chebyshev center. Points show the observed worst-world RMSE; the dashed line is the prospective reference RMSE. The source and availability audit meets the zero-tolerance criterion because it reduces the compatible image to the certified pre-treatment world.}
\label{fig:frontier}
\end{figure}

The exact calibration separates information ambiguity from center-estimation error. The flat table and post-index certificate remain outside the frontier because incompatible causal worlds survive even with large $n_E$. The typed lift narrows the fiber but does not meet the zero-tolerance reference until the source and availability audit resolves the remaining baseline-versus-mediator ambiguity. No quantity in Table~\ref{tab:frontier} uses absolute estimator bias as a proxy for $r_K(p)$.

\subsection{Certificate and extraction stress tests}

The main simulation gives the method a favorable setting: the pre treatment AI reconstruction is accurate and correctly certified. To test the boundaries, we added artificial stress tests with 200 Monte Carlo replications per setting. These scenarios vary extraction error, partial post index leakage, role misclassification, availability error, and positivity stress. The settings are intentionally stylized controlled failure mode experiments, not external validation of EHR feasibility. Their purpose is to show when the causal estimator stops behaving like the oracle prospective estimator and when the typed lift must route the feature to a compatible set instead of a point estimator.

\begin{table}[H]
\centering
\caption{Additional artificial stress tests for certificate and extraction failures. Results use 200 Monte Carlo replications with $n=1000$ generated patients per replication.}
\label{tab:additional_stress}
\resizebox{\textwidth}{!}{%
\begin{tabular}{llrrr}
\toprule
Failure mode & Setting & Bias & RMSE & Coverage \\
\midrule
AI extraction error & low extraction error & 0.005 & 0.103 & 0.980 \\
AI extraction error & moderate extraction error & 0.266 & 0.293 & 0.435 \\
AI extraction error & high extraction error & 0.497 & 0.514 & 0.050 \\
Post index partial leakage & no leakage lambda 0 & 0.011 & 0.114 & 0.935 \\
Post index partial leakage & partial leakage lambda 0.25 & -0.308 & 0.323 & 0.115 \\
Post index partial leakage & partial leakage lambda 0.50 & -0.423 & 0.433 & 0.005 \\
Certificate role misclassification & misclassification 0 percent & 0.024 & 0.112 & 0.970 \\
Certificate role misclassification & misclassification 10 percent & -0.169 & 0.202 & 0.625 \\
Certificate role misclassification & misclassification 20 percent & -0.305 & 0.319 & 0.105 \\
Availability error & available at time zero & 0.017 & 0.106 & 0.950 \\
Availability error & late availability lambda 0.50 & -0.818 & 0.822 & 0.000 \\
Availability error & late availability lambda 1.00 & -1.103 & 1.106 & 0.000 \\
Positivity stress & ordinary overlap & 0.011 & 0.116 & 0.920 \\
Positivity stress & strong positivity stress & 0.023 & 0.140 & 0.940 \\
\bottomrule
\end{tabular}%
}
\end{table}

The stress tests emphasize that information set emulation is conditional. Moderate and high extraction error widened bias and reduced coverage. Partial leakage moved the estimator away from the target even when the feature still contained pre treatment signal. Availability error was especially damaging: a fact that exists before treatment but is only learned through late or outcome adjacent documentation can behave like a post index feature when used to reconstruct the treatment decision set. Positivity stress increased variability even when the role was certified. Role misclassification was less damaging in this particular design because the downstream domination rule kept some ambiguous features out of the point information set and because the misclassified features carried less treatment decision information than the availability variables. The result should not be read as evidence that role errors are harmless; it shows that different certificate failures act through different parts of the information structure. These results support the stopping rule implied by the theory: if the certificate cannot bound extraction, availability, leakage, and overlap errors, the analysis should report a compatible set or remain exploratory.

\section{Phase~0 validation of the typed lift}

The paper is a theory and artificial simulation study, but the proposed method is meant to be used. Before a full causal effect analysis in a hospital system, the AI typed lift should undergo a Phase~0 validation that does not estimate the treatment effect. Its purpose is to ask whether the typed ledger can be produced reliably enough to define $\mathcal H_{\theta,K}^{point}$ and $\mathfrak P(K)$.

A minimal Phase~0 study samples notes or event streams from the intended target trial window, applies the typed lift, and has clinical or epidemiologic reviewers adjudicate the evidence. The recommended outputs are source pointer precision, clock extraction agreement, availability adjudication agreement, role agreement, adjudication disagreement, false admissible rate, source separation failure rate, extractor instability, and, when a common-law world model and validated exclusion rules are available, the corresponding information-radius change. Table~\ref{tab:phase0} gives the minimal reporting quantities.

\begin{table}[H]
\centering
\caption{Minimal Phase~0 validation outputs for an AI typed EHR lift. The goal is feasibility of information set emulation, not estimation of a clinical treatment effect.}
\label{tab:phase0}
\scriptsize
\resizebox{\textwidth}{!}{%
\begin{tabular}{>{\raggedright\arraybackslash}p{0.23\textwidth}>{\raggedright\arraybackslash}p{0.42\textwidth}>{\raggedright\arraybackslash}p{0.27\textwidth}}
\toprule
Quantity & Definition & Use in the framework \\
\midrule
Source pointer precision & Fraction of AI cited source spans that reviewers judge to support the extracted feature & Informs $B_{\mathrm{extract}}$ and source audit quality \\
Clock extraction agreement & Agreement for clinical time, measurement time, recording time, and availability time & Informs temporal and availability predicates \\
Availability agreement & Agreement that information was available at the treatment decision time & Determines whether a feature can be decision or design information \\
Role agreement & Agreement on baseline, design, history, marker, mediator, outcome proxy, observation, intercurrent, or excluded role & Determines role set size and false admissible risk \\
Adjudication disagreement & Fraction of features requiring adjudication or retaining unresolved reviewer disagreement after adjudication; kappa or similar agreement summaries may be reported as descriptive diagnostics & Contributes to $B_{\mathrm{judge}}$ or routes features to $J_{\mathrm{amb}}(K)$ rather than the point estimator \\
False admissible rate & Fraction of features proposed for $\mathcal A_{pre}$ that reviewers classify outside $\mathcal A_{pre}$ & If large, point estimation should be replaced by compatible reporting \\
Source separation failure rate & Fraction of late-note features that change after post treatment masking or fail span restriction & Informs $B_{\mathrm{sep}}$ and exclusion rules \\
Extractor instability & Feature or role variation across seeds, folds, prompts, or model versions & Informs $B_{\mathrm{extract}}$, value uncertainty, and role uncertainty \\
Radius reduction & Reduction in $\widehat r_{report}(K)$ after AI typing, review, or adjudication & Estimates value of typing and certification \\
\bottomrule
\end{tabular}%
}
\end{table}

This validation is intentionally smaller than an applied causal study. It can be performed on tens to hundreds of records, depending on the complexity of the features. A successful Phase~0 study does not prove exchangeability or treatment effectiveness. It shows that the raw EHR can be lifted into a typed information ledger with quantified residual ambiguity. Inter-reviewer agreement statistics, such as kappa-type summaries, are useful quality diagnostics, but they are not causal certificates by themselves. Persistent disagreement is treated as unresolved role or availability uncertainty and is carried into $B_{\mathrm{judge}}$, $J_{\mathrm{amb}}(K)$, or the compatible reporting set. For medical informatics and real world evidence applications, Phase~0 is the minimum empirical feasibility test of the typed lift. If Phase~0 fails, the appropriate conclusion is not that AI is useless; it is that the retrospective study is not yet close enough to the prospective information set for the intended causal claim.

\subsection{Synthetic Phase~0 note validation}

We added a reproducible synthetic Phase~0 validation to illustrate the workflow in Table~\ref{tab:phase0}. This is not a deidentified EHR study and not a treatment-effect analysis. The note-like records are generated from the artificial data generating process in the supplementary script. Each synthetic patient has a pre-index clinic note, order history, discharge-summary history statement, post-index progress note, and monitoring-intensity record. The generator assigns the true source, clinical time, recording time, availability status, and causal role for each candidate feature. The typed lift attempts to recover those fields with prespecified error probabilities, and the audit stage simulates improved field recovery. Point admission in this diagnostic uses the proposed role set alone; it does not implement the complete certificate predicates, availability gate, or outcome-analysis workflow. The false-admissible proportion is conditional on that role-only admission.

The validation measures how accurately the event stream is lifted into a typed ledger. A separate, prespecified routing illustration summarizes the spread of role-specific analyses. It does not use patient data and does not claim external EHR feasibility. Its value is reproducibility: the source text, extraction errors, adjudication errors, and role-specific analysis values are all generated by declared mechanisms. The retained analysis lists below are stipulated; they are not inferred from the measured audit error rates.

\begin{table}[H]
\centering
\caption{Synthetic Phase~0 note validation. The artificial validation contains 120 generated patients and 600 candidate feature records. Values are proportions. The audit stage represents source-span review, clock adjudication, and role review under the declared synthetic error model.}
\label{tab:synthetic_phase0_metrics}
\footnotesize
\begin{tabular}{lccccccc}
\toprule
Stage & Source & Clock & Avail. & Role cov. & False adm. & Sep. fail & Instab. \\
\midrule
AI typed lift & 0.953 & 0.888 & 0.878 & 0.958 & 0.042 & 0.192 & 0.188 \\
AI plus Phase~0 audit & 0.988 & 0.972 & 0.973 & 0.977 & 0.016 & 0.033 & 0.000 \\
\bottomrule
\end{tabular}
\end{table}

Table~\ref{tab:synthetic_phase0_metrics} shows the intended use of Phase~0. The AI typed lift already recovers most source and role evidence, but residual ambiguity remains. Audit mainly improves clock, availability, and source-separation evidence, which are the fields most likely to determine whether a feature belongs in the point information set or in the compatible set.

For a separate routing illustration, the same artificial design supplies four analysis values: a total-effect target of $0.842$, a direct-effect target of $0.550$, an outcome-proxy-adjusted regression coefficient of $0.102$, and an observation-weighted effect of $0.928$. These quantities answer different questions within one data-generating system. They are not values of one locked estimand across observationally indistinguishable worlds. We stipulate three nested lists of analyses to illustrate how routing changes their numerical spread; the lists are not calibrated from Table~\ref{tab:synthetic_phase0_metrics}.

\begin{table}[H]
\centering
\caption{Prespecified role-specific analysis spread accompanying synthetic Phase~0 validation. The initial list contains all four analyses; the second omits outcome-proxy adjustment; the third retains the total and observation-weighted effects. Half-ranges describe these lists and are not exact fiber radii or minimax risks.}
\label{tab:synthetic_phase0_spread}
\scriptsize
\begin{tabular}{lccccc}
\toprule
Illustrative routing stage & Lower & Upper & Center & Half-range & Squared half-range \\
\midrule
Flat table $K_0$ & 0.102 & 0.928 & 0.515 & 0.413 & 0.170 \\
AI typed lift $K_\theta$ & 0.550 & 0.928 & 0.739 & 0.189 & 0.036 \\
AI plus Phase~0 audit $K_\theta\vee E$ & 0.842 & 0.928 & 0.885 & 0.043 & 0.002 \\
\bottomrule
\end{tabular}
\end{table}

Table~\ref{tab:synthetic_phase0_spread} is a descriptive routing example. Its decreasing half-range does not establish that the measured audit performance shrinks a common-law observational fiber or reduces exact minimax risk. Such a claim would require a single locked estimand, explicitly compatible worlds, and a validated link from audit evidence to world exclusion. The separate construction in Section~\ref{sec:finite_frontier} provides the common-law finite-world illustration; the synthetic notes here demonstrate audit diagnostics only.

\section{Discussion}

Information set emulation treats the EHR record as an incomplete information structure rather than as a covariate table. The typed lift refines that structure by adding source, clock, availability, version, and role evidence. The induced observational fiber determines which prospective causal worlds remain possible; its estimand image determines whether the analysis can support a point claim or must report a set. Conditional on the typed evidence available to the analyst, the squared radius of that image is the exact minimax information-structure risk for the locked scalar target.

The construction is not limited to EHRs. The same mathematics applies to retrospective event streams, including claims data, registries, wearable streams, digital platform logs, and administrative records, whenever the scientific goal is to emulate a prospective information set rather than merely enrich a covariate table. EHRs are especially demanding because clinical facts, recording times, availability, observation processes, and outcome-adjacent narratives are routinely entangled.

The certificate is the operational link between the event stream and the fiber. Weak certificates leave many causal worlds compatible with the same EHR table. Stronger certificates narrow the fiber by documenting clocks, sources, availability, role evidence, version stability, and extraction stability. A wide radius is therefore a design diagnosis: it identifies whether the next investment should be chart review, availability adjudication, source separation, observation-process modeling, external validation, repeated extraction, or a prospective substudy.

Uncertainty is reported differently under this framework. A feature with unresolved causal type does not yield a single target by default. It yields a compatible estimand set. The experiment comparison in Section~4 gives the broad design interpretation, while the information radius gives the estimand-specific diagnostic used for reporting. AI typed lifting should be evaluated by how it changes that set, not only by how well its features predict outcomes. A narrow radius matters only when the supporting evidence has not suppressed plausible downstream roles; a wide radius matters because it shows which information is missing.

\subsection*{Relation to adjacent methods}

The closest literatures address overlapping but distinct layers of the problem. Target trial emulation fixes treatment strategies, time zero, eligibility, outcomes, follow up, and analysis \citep{hernanrobins2020,rctduplicate2021,wang2023,target2025}. Importantly, the distinction between conceptual target-trial specification and its operational realization in EHR data is not new to this paper. \citet{wang2026} formulate EHR target trial emulation as a data-constrained operational design problem and explain when the available EHR cannot support the intended estimand. \citet{rafalko2025} describe how natural language processing can make target-trial components available from unstructured EHR data. Information set emulation builds on that premise rather than claiming it as its novelty.

Text-based causal inference already uses clinical narratives to address missing data, confounding, and treatment-effect heterogeneity \citep{mozer2024}, while representation methods learn or adapt text features for causal adjustment or prediction \citep{veitch2020,zeng2022}. Recent work directly evaluates strategies for inserting LLM-extracted clinical covariates into propensity-score and doubly robust pipelines \citep{liu2026}. Methods for generated regressors, AI-generated covariates, machine-learning predictions as covariates, and prediction-powered inference study uncertainty after a generated variable has been accepted for an analytic role \citep{battaglia2024,fongtyler2021,angelopoulos2023}.

The narrower contribution here is to represent source, clinical time, recording time, decision-time availability, observation status, and causal role in one auditable certificate; to let unresolved role evidence define a joint observational fiber and compatible estimand image; and to place validation and stopping rules on the estimand scale. The information radius is not a new standard-error correction and the Chebyshev identity is not claimed as a new general theorem of partial identification. The contribution is the integrated EHR information structure in which those classical tools become operational. This boundary also distinguishes the present paper from work that evaluates whether an accepted LLM covariate improves an estimator: the certificate asks first whether the feature is admissible for the locked causal role at all.

\subsection*{Auditing published studies and reverse trial reports}

Information set emulation can also be used as a retrospective audit of published observational studies. The goal is not to assign a probability that a published conclusion is correct. It is to reconstruct the documented information structure in the paper and to ask which causal roles, source windows, availability claims, observation processes, and design variables remain unresolved. Let $K_{\mathrm{paper}}$ denote the typed evidence that can be recovered from a published protocol, covariate definitions, source descriptions, timing statements, analysis plan, and reported summaries. Let $\mathcal S_{\mathrm{paper}}$ be the set of observed-data laws compatible with the reported tables, estimates, and sampling description. A documented paper fiber can be written as
\[
\mathcal F_{\mathrm{paper}}=
\{\theta\in\Theta: \Pi_{K_{\mathrm{paper}}}(\theta)\in\mathcal S_{\mathrm{paper}}\}.
\]
For a locked claim, the documented compatible estimand image is
\[
\mathfrak P_{\mathrm{paper}}=\Psi\{\mathcal F_{\mathrm{paper}}\},
\qquad
r_{\mathrm{paper}}=\inf_{c\in\mathbb R}
\sup_{\psi\in\mathfrak P_{\mathrm{paper}}}|\psi-c|.
\]
This paper audit radius is not a score for whether the published conclusion is true. It is a measure of how much target ambiguity remains in the reported information itself. A large radius means that the publication does not distinguish among causally different interpretations, for example baseline adjustment, mediator adjustment, outcome-proxy adjustment, or observation-process modeling. It does not prove the original claim false; it identifies which additional source, timing, availability, or role evidence would be needed before the reported point claim could be read as a locked causal estimand.

Paper audit radii are therefore documented-information radii. They are conditional on $K_{\mathrm{paper}}$, $\mathcal S_{\mathrm{paper}}$, and the declared identification and sensitivity model $\Theta$. They do not recover source documents, time stamps, adjudication records, or design information that may have been available to the original investigators but was not reported. When only summary tables and model descriptions are available, the operational audit object is a conservative reporting set obtained by varying the role assignments and validation-calibrated sensitivity envelopes that remain compatible with the publication. In that setting $r_{\mathrm{paper}}$ is best read as a model-conditional audit quantity: it makes the assumptions and unresolved information visible on the estimand scale, but it is not an automatic objective distance from truth. The present paper defines this audit object but does not perform a full audit of a published observational study. Such an application requires a separate extraction of the published protocol, covariate definitions, timing statements, source descriptions, and reported summaries, and should be reported as an applied audit rather than as evidence that the original conclusion is true or false.

Published randomized-trial reports and protocols play a different role. They do not determine the EHR radius. They help define the prospective information reference $K_P$. A reverse trial audit parses the trial report into eligibility criteria, baseline measurements, stratification variables, treatment milestones, observation schedules, intercurrent-event strategies, missingness rules, subgroup variables, and outcome definitions. These elements become the checklist against which the retrospective EHR typed lift is audited. Trial reports define the prospective information reference; EHR certificates determine how close the retrospective information structure comes to that reference.

\subsection*{How to shrink the radius in practice}

A large radius should not be interpreted only as failure. It identifies what information is missing and what validation would make the study more informative. If clinical time is ambiguous, the next step is source sentence and section audit. If availability is ambiguous, the next step is adjudication of whether the fact appeared in a prior note, order, structured field, or decision record before treatment. If the extractor is unstable, the next step is version locking, repeated extraction, and seed or fold variation assessment. If clinician judgement is unrecorded, the next step is chart review, prospective substudy, or an explicit sensitivity parameter. If observation processes are unresolved, visit frequency, testing intensity, note density, censoring, and follow up should be modeled as observation variables rather than absorbed into disease state. The practical output of information set emulation is therefore not only a yes or no decision. It is a map of which validation actions would reduce $\widehat B_t(K)$ and which unresolved components force reporting a compatible estimand set.

\subsection*{Downstream domination is a primary-analysis rule}

The downstream domination rule is intentionally strict for the primary point estimator. If a feature may still be a mediator, outcome proxy, post treatment descriptor, or observation process, treating it as an ordinary baseline covariate would choose one world from the fiber without evidence. The feature is not discarded. It remains available for a compatible estimand set, a mediation lock, an observation model, a mechanism analysis, or a next study hypothesis. In applied work, analysts may also report a downstream contamination sensitivity analysis. For example, let
\[
\mathfrak P_\gamma(K)=\{\Psi(M):M\in\mathcal M(K),\ P(T_j\in\mathcal A_{down}\mid j\in J_{amb})\le\gamma\},
\]
and define
\[
r_\gamma(K)=\inf_{c\in\mathbb{R}}\sup_{\psi\in\mathfrak P_\gamma(K)}|\psi-c|.
\]
The parameter $\gamma$ is a sensitivity parameter, not an estimated fact. It encodes an extra assumption about how much downstream contamination remains among ambiguous features. Such curves are useful for explaining how strong an assumption would be required to recover a narrow claim. They are not a replacement for the primary rule, and the primary point estimator should not include ambiguous features merely because one value of $\gamma$ is favorable.

Several limitations remain. First, the certificate is operational but not automatic. The rubric in Table~\ref{tab:certificate_rubric} makes the required evidence explicit, but real implementation will require trained reviewers, time aware extraction, annotation protocols, and calibration of AI systems against chart review or prospective sub studies. Human review does not remove uncertainty; it relocates disagreement into auditable validation components. When reviewers disagree about source, clock, availability, or role, the primary analysis should not resolve that disagreement by convenience. It should widen the reporting radius or route the feature to compatible reporting until additional evidence is obtained. Second, the paper uses synthetic data and synthetic text vignettes only. This is appropriate for checking theorems because the true prospective information set is known, but it does not prove feasibility in a real hospital system. The next empirical step is not a treatment-effect analysis. It is a small deidentified EHR Phase~0 study with clinical NLP that measures whether source pointers, clocks, availability claims, and causal roles can be adjudicated with acceptable residual ambiguity. Third, generated regressor uncertainty is only partially addressed. The extractor regimes in Table~\ref{tab:extractor_regimes} clarify when the current estimator is valid, but study learned extractors require fold separated extraction and additional validation. Fourth, information radius estimation depends on role enumeration and sensitivity bounds. Coverage of the proposed reporting set depends on its simultaneous sampling and uniform validation conditions; uncalibrated sensitivity allowances alone do not provide a coverage guarantee.

Future theory could study quantitative links between global experiment distance, validation error, and target specific information radius under additional structure, such as specified information projection models or smoothness restrictions on the estimand map. We do not impose such global assumptions here. The present framework deliberately keeps the primary inferential object local to the locked scalar target, where the algebraic link between certified evidence, observational fibers, compatible estimand images, and exact minimax risk remains direct and auditable.

The framework does not solve unmeasured confounding by definition. If a treatment decision depended on information absent from the EHR, the compatible estimand set should remain wide. If observation is strongly informative and cannot be modeled, the EHR compression drift will remain large. If an AI system extracts outcome adjacent summaries and labels them as baseline state, the certificate should fail. The method therefore gives both a route to causal estimation and a stopping rule.

The main contribution is the link between information reconstruction and causal effect estimation. Target trial emulation specifies the trial. Information set emulation specifies the pre treatment knowledge needed for that trial to be estimated from EHR data. When the reconstructed information set is certified, the analysis proceeds to a causal estimator. When it is not certified, the analysis reports compatible estimands and an information radius instead of pretending that a single estimand has been identified.

\section{Conclusion}

This paper introduced information set emulation for causal inference with AI and EHRs. The central claim is that emulating a target trial is not enough. A retrospective EHR study must also emulate the pre treatment information set that would have been available in the corresponding prospective study. AI is central because it can perform the initial typed lift of the raw EHR event stream: notes, orders, laboratory trajectories, encounters, and reports are converted into a typed causal information ledger before estimation begins.

The analysis turns on the set of prospective causal estimands compatible with the typed evidence produced by the AI lift. Without sufficient evidence about source time, availability, and causal role, the prospective causal effect can remain nonidentified from the flat EHR table; the fiber radius theorem shows that no estimator can overcome incompatible information set worlds that imply different estimands. With typed evidence, the compatible estimand set shrinks. The information radius measures residual target ambiguity; its square gives the exact minimax mean squared error left by that structure. The value of typing and value of certification measure how much AI typing, chart review, availability adjudication, source separation, repeated extraction, or prospective substudy reduce that distance and the corresponding exact minimax information risk.

Information set emulation is the target of the method; the information radius is its diagnostic. When the radius is small enough for the scientific tolerance, standard causal estimation can proceed using the certified reconstructed information set. When the radius remains large, the correct output is not an overconfident point estimate but a compatible reporting set and a diagnosis of what information is missing. The promise is therefore conditional but substantial: retrospective EHR studies can approach prospective study reasoning when AI first organizes EHR data into a typed information set and when the remaining information radius is small enough for the target claim.

\section*{Data and code availability}

The paper is a theory and artificial simulation study. No real patient data are used. The supplementary package contains the manuscript source, executable simulation scripts, full and script-specific seed schedules, raw replicate level results, scripts that regenerate the simulation tables and frontier figure, a verification script, synthetic Phase~0 note validation files, a certificate template, and a minimal reporting set calculator for compatible estimands. The verification script reruns the main simulation, stress tests, exact compatible-world frontier, compression-drift validation, table generation, frontier figure generation, and synthetic Phase~0 diagnostics and analysis-spread illustration, then compares regenerated outputs with the bundled results. Simulation reporting separates estimands, methods, and performance measures \citep{morris2019}. A real EHR or clinical NLP case study would require separate data access, ethics approval or exemption, and site specific annotation review.

\appendix

\section{Artificial data generating mechanisms}

This appendix records the artificial data generating mechanisms used in the simulations. The purpose is reproducibility and interpretation of the stress tests, not realism. Let
\[
X\sim N(0,1), \qquad J=0.65X+\varepsilon_J,
\]
where $J$ is latent clinician judgement and $\varepsilon_J\sim N(0,1)$. Visit or measurement intensity is
\[
O=0.35X+0.55J+\varepsilon_O,
\]
with $\varepsilon_O\sim N(0,1)$. The design or treatment availability indicator is
\[
D\sim \operatorname{Bernoulli}\{\operatorname{expit}(-0.15+0.55X-0.25O+0.25J)\}.
\]
Presence in the EHR analytic frame is
\[
S\sim \operatorname{Bernoulli}\{\operatorname{expit}(1.20+0.35X+0.55D+0.45O)\}.
\]
Treatment choice is generated as
\[
A\sim \operatorname{Bernoulli}\{\operatorname{expit}(-0.25+0.60X+0.85D+1.05J+0.60O)\}.
\]
The potential outcomes are
\[
Y(a)=0.30+0.75X+0.80D+1.00J+0.65O+a(0.60+0.30J)+\varepsilon_Y,
\]
where $\varepsilon_Y\sim N(0,1)$. The individual treatment effect is therefore
\[
Y(1)-Y(0)=0.60+0.30J.
\]
Outcome observation is generated as
\[
\Delta\sim \operatorname{Bernoulli}\{\operatorname{expit}(1.05+0.25A+0.45X+0.50J+0.75O+0.35D)\}.
\]
The structured EHR contains $X$, $D$, and $O$ with optional measurement noise, but not $J$. The primary AI reconstruction is
\[
Z_{pre}=J+e_{pre},
\]
where $e_{pre}\sim N(0,\sigma_{AI}^2)$. The source separated note feature is
\[
Z_{sep}=J+0.10X+e_{sep}.
\]
The compressed AI score that erases design and observation information is
\[
Z_{comp}=0.60X+0.85J+e_{comp}.
\]
The post index or outcome adjacent feature is generated through a mediator
\[
M=0.65A+0.55J+0.25X+0.20D+e_M
\]
and
\[
Z_{post}=0.90M+0.80Y+e_{post}.
\]
The prospective target in each replication is the empirical mean of $0.60+0.30J_i$ among patients with $S_i=1$.

The additional stress tests modify this baseline mechanism. Extraction error varies $\sigma_{AI}$. Partial leakage replaces $Z_{pre}$ by a standardized blend $Z_{pre}+\lambda Z_{post}$. Certificate role misclassification replaces a proportion of certified baseline features by post index summaries. Availability error replaces a pre clinical judgement signal by a late source that combines pre treatment signal with treatment and outcome adjacent information. Positivity stress multiplies the treatment logit coefficients by a factor larger than one.

\section{Certificate and reporting templates}

A machine readable certificate template should contain at least the following fields:
\[
\begin{gathered}
\{\text{feature\_id},\ \text{source\_pointer},\ \text{feature\_value},\ \text{clinical\_time},\ \\
\text{measurement\_time},\ \text{recording\_time},\ \text{availability\_time},\ \\
\text{treatment\_relation},\ \text{outcome\_relation},\ \text{proposed\_roles},\ \\
\text{compatible\_roles},\ \text{certified\_roles},\ \text{failed\_predicates},\ \\
\text{representation\_version},\ 
\text{seed\_rule},\ \text{reviewer\_status}\}.
\end{gathered}
\]
A minimal reporting set calculator takes a table of role specific estimates $\widehat\Psi_t$, standard errors $\widehat{se}_t$, and validation-calibrated sensitivity envelopes $\widehat B_t(K)$ and returns
\[
\widehat{\mathfrak P}_{1-\alpha}(K)=\bigcup_{t\in\mathcal C(K)}
[\widehat\Psi_t-c_S\widehat{se}_t-\widehat B_t(K),
\widehat\Psi_t+c_S\widehat{se}_t+\widehat B_t(K)].
\]
The calculator uses $c_S=z_{1-\alpha_S/(2m)}$, where $m=|\mathcal C(K)|$ and $\alpha_S+\alpha_V\le\alpha$, and preserves disconnected interval unions. Every assignment must concern the same locked scalar target. Image coverage requires the simultaneous sampling and uniform validation conditions in the calibration proposition; the calculator does not establish those conditions. In a real study, every nonzero bound should be linked to a chart review, repeated extraction audit, sensitivity parameter, overlap diagnostic, or observation process analysis.

\section{A conservative MNAR observation bound}

This appendix gives a simple operational bound for the observation sensitivity family in the main text. It is not a replacement for a problem-specific missingness analysis. Suppose the outcome is reported on a bounded scale $[y_{min},y_{max}]$, or is explicitly clipped to a prespecified range that defines the reported target, and let $R_Y=y_{max}-y_{min}$. If, within treatment arm $a$, the unmeasured observation mechanism can tilt the baseline observed-outcome distribution by a density ratio bounded between $\exp(-\Gamma_a)$ and $\exp(\Gamma_a)$, then the absolute change in an arm-specific mean is bounded conservatively by
\[
B_{MNAR,a}(\Gamma_a)=R_Y\tanh(\Gamma_a/2).
\]
For a treatment contrast, one may add
\[
B_{MNAR}(\Gamma)=B_{MNAR,1}(\Gamma_1)+B_{MNAR,0}(\Gamma_0)
\]
to the observation component of the reporting bound. For unbounded outcomes, a standardized scale or a plausible range alone does not establish this bound. One must either assume that the declared support bound is valid or explicitly change the target to a prespecified clipped outcome; otherwise a separate tail or moment bound is needed. The purpose is not to make MNAR harmless; it is to keep the unverified observation assumption visible on the same estimand scale as the information radius.


\begin{thebibliography}{99}

\bibitem[Angelopoulos et al.(2023)]{angelopoulos2023}
Angelopoulos, A. N., Bates, S., Fannjiang, C., Jordan, M. I., and Zrnic, T. Prediction powered inference. \emph{Science}. 2023;382:669 to 674.

\bibitem[Battaglia et al.(2024)]{battaglia2024}
Battaglia, L., Christensen, T., Hansen, S., and Sacher, S. Inference for regression with variables generated by AI or machine learning. arXiv:2402.15585. 2024.

\bibitem[Cashin et al.(2025)]{target2025}
Cashin, A. G., et al. Transparent reporting of observational studies emulating a target trial: the TARGET Statement. \emph{JAMA}. 2025;334(12):1084--1093. doi:10.1001/jama.2025.13350.

\bibitem[Chernozhukov et al.(2018)]{dml2018}
Chernozhukov, V., Chetverikov, D., Demirer, M., Duflo, E., Hansen, C., Newey, W., and Robins, J. Double debiased machine learning for treatment and structural parameters. \emph{The Econometrics Journal}. 2018;21:C1 to C68.


\bibitem[Clivio et al.(2026)]{clivio2026}
Clivio, O., D'Amour, A., Franks, A., Bruns-Smith, D., Holmes, C., and Feller, A. Deconfounding scores and representation learning for causal effect estimation with weak overlap. arXiv:2604.00811v1. 2026.

\bibitem[D'Amour and Franks(2021)]{damour2021}
D'Amour, A. and Franks, A. Deconfounding scores: feature representations for causal effect estimation with weak overlap. arXiv:2104.05762v1. 2021.

\bibitem[Daoud et al.(2022)]{daoud2022}
Daoud, A., Jerzak, C. T., and Johansson, R. Conceptualizing treatment leakage in text-based causal inference. In \emph{Proceedings of the 2022 Conference of the North American Chapter of the Association for Computational Linguistics: Human Language Technologies}. 2022:5638--5645. doi:10.18653/v1/2022.naacl-main.413.

\bibitem[Fong and Tyler(2021)]{fongtyler2021}
Fong, C. and Tyler, M. Machine learning predictions as regression covariates. \emph{Political Analysis}. 2021;29:467 to 484.

\bibitem[Franklin et al.(2021)]{rctduplicate2021}
Franklin, J. M., et al. Emulating randomized clinical trials with nonrandomized real-world evidence studies: first results from the RCT-DUPLICATE initiative. \emph{Circulation}. 2021;143(10):1002--1013. doi:10.1161/CIRCULATIONAHA.120.051718.

\bibitem[Hernan and Robins(2020)]{hernanrobins2020}
Hernan, M. A. and Robins, J. M. \emph{Causal Inference: What If}. Boca Raton: Chapman and Hall CRC. 2020.

\bibitem[Lee et al.(2016)]{lee2016}
Lee, J. D., Sun, D. L., Sun, Y., and Taylor, J. E. Exact post selection inference, with application to the lasso. \emph{Annals of Statistics}. 2016;44:907 to 927.

\bibitem[Liu et al.(2026)]{liu2026}
Liu, L., Chen, J., and Macropol, K. LLM-extracted covariates for clinical causal inference: rethinking integration strategies. arXiv:2604.16763v3, 2026.

\bibitem[Mozer et al.(2024)]{mozer2024}
Mozer, R., Kaufman, A. R., Celi, L. A., and Miratrix, L. Leveraging text data for causal inference using electronic health records. arXiv:2307.03687, revised 2024.


\bibitem[Morris et al.(2019)]{morris2019}
Morris, T. P., White, I. R., and Crowther, M. J. Using simulation studies to evaluate statistical methods. \emph{Statistics in Medicine}. 2019;38:2074 to 2102.

\bibitem[Rafalko et al.(2025)]{rafalko2025}
Rafalko, N., Gianfrancesco, M., and Goldstein, N. D. On the use of natural language processing to implement the target trial framework using unstructured data from the electronic health record. \emph{Global Epidemiology}. 2025;9:100204. doi:10.1016/j.gloepi.2025.100204.

\bibitem[Robins et al.(2000)]{robins2000}
Robins, J. M., Hernan, M. A., and Brumback, B. Marginal structural models and causal inference in epidemiology. \emph{Epidemiology}. 2000;11:550 to 560.

\bibitem[Veitch et al.(2020)]{veitch2020}
Veitch, V., Sridhar, D., and Blei, D. Adapting text embeddings for causal inference. \emph{Proceedings of UAI}. 2020.

\bibitem[Wang et al.(2023)]{wang2023}
Wang, S. V., Schneeweiss, S., and the RCT-DUPLICATE Initiative. Emulation of randomized clinical trials with nonrandomized database analyses: results of 32 clinical trials. \emph{JAMA}. 2023;329(16):1376--1385. doi:10.1001/jama.2023.4221.

\bibitem[Wang et al.(2026)]{wang2026}
Wang, Y., Li, Y., Lin, T., et al. An operational target trial emulation framework for causal inference using electronic health record data. \emph{npj Digital Medicine}. 2026;9:424. doi:10.1038/s41746-026-02563-z.

\bibitem[Zeng et al.(2022)]{zeng2022}
Zeng, J., et al. Uncovering interpretable potential confounders in electronic medical records. \emph{Nature Communications}. 2022;13:1014. doi:10.1038/s41467-022-28546-8.


\bibitem[Le Cam(1986)]{lecam1986}
Le Cam, L. \emph{Asymptotic Methods in Statistical Decision Theory}. Springer; 1986.

\end{thebibliography}
\end{document}